\newcommand{\fnref}[1]{\textsuperscript{#1}}
\newcommand{\LTM}{\ensuremath{LTR}\xspace}
\newcommand{\TCT}{\ensuremath{LCF}\xspace}
\newcommand{\LDR}{\ensuremath{LDR}\xspace}
\newcommand{\MOV}{\textsc{MoV}\xspace}
\newcommand{\POV}{\textsc{PoV}\xspace}
\newcommand{\cstar}{c_\star}
\newcommand{\N}{\mathbb{N}}
\newcommand{\Floor}[1]{\left\lfloor #1 \right\rfloor}
\newcommand{\card}[1]{\vert #1 \vert}
\DeclareMathOperator*{\argmax}{arg\,max}
\newcommand{\bmone}{\mathbf{1}}
\newcommand{\cond}{\ \middle| \ }
\RenewDocumentCommand\Pr{gg}{%
	\ensuremath{%
		\mathbf{P} \IfNoValueTF{#1}{}{\left( #1 \IfNoValueTF{#2}{}{\cond #2} \right)}
	}
}
\NewDocumentCommand\Prob{mgg}{%
	\ensuremath{%
		\mathbf{P}_{#1} \IfNoValueTF{#2}{}{\left( #2 \IfNoValueTF{#3}{}{\cond #3} \right)}
	}
}
\NewDocumentCommand\Ex{gg}{%
	\ensuremath{%
		\mathbf{E} \IfNoValueTF{#1}{}{\left[ #1 \IfNoValueTF{#2}{}{\cond #2} \right]}
	}
}
\newtheorem{theorem}{Theorem}
\newtheorem{lemma}{Lemma}
\theoremstyle{definition}
\newtheorem{definition}{Definition}
\title{
Exploiting Social Influence to Control Elections \\Based on Scoring Rules
}
\author{Federico Cor\`o}
\author{Emilio Cruciani}
\author{Gianlorenzo D'Angelo}
\author{Stefano Ponziani}
\affil{%
 GSSI -- Gran Sasso Science Institute
 \\
 Viale Francesco Crispi 7, 67100, L'Aquila, Italy
 \vspace{0.5em}
 \authorcr
 \texttt{federico.coro@gssi.it}\\
 \texttt{emilio.cruciani@gssi.it}\\
 \texttt{gianlorenzo.dangelo@gssi.it}\\
 \texttt{stefano.ponziani@gssi.it}
}
\date{}
\begin{document}
\maketitle

\begin{abstract}
Online social networks are used to diffuse opinions and ideas among users, enabling a faster communication and a wider audience.
The way in which opinions are conditioned by social interactions is usually called social influence. 
Social influence is extensively used during political campaigns to advertise and support candidates.

Herein we consider the problem of exploiting social influence in a network of voters in order to change their opinion about a target candidate with the aim of increasing his chance to win/lose the election in a wide range of voting systems.
 
We introduce the \emph{Linear Threshold Ranking}, a natural and powerful extension of the well-established \emph{Linear Threshold Model}, which describes the change of opinions taking into account the amount of exercised influence.
We are able to maximize the score of a target candidate up to a factor of $1-1/e$ by showing submodularity.
We exploit such property to provide a $\frac{1}{3}(1-1/e)$-approximation algorithm for the \emph{constructive} election control problem.
Similarly, we get a $\frac{1}{2}(1-1/e)$-approximation ratio in the \emph{destructive} scenario.
The algorithm can be used in \emph{arbitrary scoring rule voting systems}, including \emph{plurality rule} and \emph{borda count}.
Finally, we perform an experimental study on real-world networks, measuring Probability of Victory (\POV) and Margin of Victory (\MOV) of the target candidate, to validate the model and to test the capability of the algorithm.
\end{abstract}

\clearpage

\section{Introduction}
\label{sec:intro}

As humans, we usually have a specific personal opinion on certain topics, such as lifestyle or consumer products.
These opinions, normally formed on personal life experience and information, can be conditioned by the interaction with our friends leading to a change in our original opinion on a particular topic if a large part of our friends holds a different opinion.

In the last decades, this phenomenon of opinion diffusion has been intensely investigated from many different perspectives, from sociology to economics.
In recent years, there has been a growing interest on the relationship between social networks and political campaigning.
Political campaigns nowadays use online social networks to lead elections in their favor; for example, they can target specific voters with fake news~\cite{allcott2017social}.
A real-life example of political intervention in this context occurred in the US Congressional elections in 2010, where a set of users were encouraged to vote with a message on Facebook.
These messages directly influenced the real-world voting behavior of millions of people~\cite{bond201261}.
Another example is that of French elections in 2017, where automated accounts in social networks spread a considerable portion of political content, mostly fake news, trying to influence their outcome~\cite{ferrara2017disinformation}.

There exist an extensive literature on manipulating elections without considering the underlying
social network structure of the voters, e.g., swap bribery~\cite{elkind2009swap}, shift bribery~\cite{bredereck2016complexity}; 
we point the reader to a recent survey~\cite{faliszewski2016control}.
%

The study of opinion diffusion modeled as a majority dynamics has attracted much attention in recent literature \cite{auletta2015minority,brill2016pairwise,botan2017propositionwise}.
In these models each agent has an initial preference list and at each time step a subset of agents updates their opinions, i.e., their preference lists, according to some majority-based rule that depends on the opinions of their neighbors in the network.

Nevertheless, there are only few studies on the opinion diffusion on social networks.
The Independent Cascade Model~\cite{kempe2003maximizing} has been considered as diffusion process to guarantee that a target candidate wins/loses~\cite{bartholdi1992hard,hemaspaandra2007anyone}.
The constructive (destructive) election control problem consists in targeting a specific candidate to change voters' opinions about him with the aim of maximizing (minimizing) his margin and probability of victory~\cite{wilder2018controlling}.
A variant of the Linear Threshold Model~\cite{kempe2003maximizing} with weights on the vertices has been considered on a graph in which each node is a cluster of voters with a specific list of candidates and there is an edge between two nodes if they differ by the ordering of a single pair of adjacent candidates~\cite{faliszewski2018opinion}.
Moreover, it has been studied how to manipulate the network in order to have control on the majority opinion, e.g., bribing or adding/deleting edges, on
a simple Linear Threshold Model where each node holds a binary opinion, each edge has the same fixed weight, and all vertices have a threshold fixed to $1/2$~\cite{bredereck2017manipulating}.

In this work we focus on a variant of the election control through social influence problem defined in~\cite{wilder2018controlling}:
Given a social network of people willing to vote, we want to select a fixed-size subset of voters such that their influence on the others will change the election outcome, maximizing the chances of a target candidate to win or lose, in the specific scenario in which only the opinions about a target candidate can be changed.
Differently from previous work~\cite{wilder2018controlling} we consider more general voting systems and a different diffusion model, that takes into account the degree of influence that voters exercise on the others and is able to describe the scenario in which a high influence on someone can radically change its opinion.
In this setting we prove the nontrivial fact that any scoring function is monotone and submodular with respect to the initial set of active nodes.
Moreover we exploit this fact to prove a constant factor approximation of the election control problem in our model.

\subsection{Original Contribution}
\begin{itemize}
\item We introduce the \emph{Linear Threshold Ranking}, a natural and powerful extension of the \emph{Linear Threshold Model} for the election scenario that takes into account the degree of influence of the voters on each other.
\item We prove that, in such model, we can achieve a $(1-1/e)$-approximation to the problem of maximizing the score of a target candidate by proving submodularity in the general case of the \emph{scoring rule} for \emph{arbitrary} scoring function (including popular voting systems, e.g., \emph{plurality rule} or \emph{borda count}), with any number of candidates.
\item We exploit the $(1-1/e)$-approximation algorithm that maximizes the score to achieve an extra approximation factor of $\frac{1}{3}$ to the problem of maximizing the \emph{Margin of Victory} of a target candidate in \emph{arbitrary scoring rule voting systems} with any number of candidates.
\item We give a simple reduction that maps \emph{destructive control} problems to constructive control ones and allows us to achieve a $\frac{1}{2}(1-1/e)$-approximation to the destructive control problem.
\item We perform simulations of our model on four heterogeneous real-world networks and test the capability of our algorithm.
\end{itemize}

\section{Background}
\label{sec:preliminaries}

In this section we present some notions and concepts about \emph{voting systems} and \emph{influence maximization} on social networks that will be used in the design and analysis of the algorithm.
Moreover we formally introduce the problem and the notation used to analyze it.

\subsection{Voting Systems}
\label{ssec:voting}

\emph{Voting systems} are sets of rules that regulate all aspects of elections and  that determine their outcome.
In particular a voting system decides candidates and voters eligibility, other than fixing the rules for determining the winner of the elections.
\emph{Social choice theory} formally defines and analyzes voting systems, studying how the combination of individual opinions or preferences reaches a collective decision; 
\emph{computational social choice}, instead, studies the computational complexity of outcomes of voting rules and can serve as a barrier against strategic manipulation in elections~\cite{chevaleyre2007short,faliszewski2010ai,brandt2016handbook,endriss2017trends}.

We focus on two \emph{single-winner} voting systems:
\begin{itemize}
    \item \emph{Plurality rule}: Each voter can only express a single preference among the candidates and that with the \emph{plurality} of the votes wins, i.e., it is sufficient to have the highest number of votes and there is no need of an absolute majority (50\%+1 of votes).
    \item \emph{Scoring rule}: Each voter expresses his preference as a \emph{ranking}; each candidate is then assigned a \emph{score}, computed as a function of the positions he was ranked among the voters.
\end{itemize}
The former is arguably the simplest scenario and is one of the most commonly used for national legislatures and presidential elections.
The latter is a very general definition, but can include several popular election methods by choosing an adequate \emph{scoring function}:
\begin{itemize}
    \item if the scoring function assigns 1 point to the first candidate and 0 to all the others this is equivalent to the \emph{plurality rule};
    \item if the scoring function assigns 1 point to the first $t$ candidates and 0 to the others then it is equivalent to the \emph{$t$-approval}, where each voter approves $t$ candidates;
    \item if the scoring function assigns 1 point to the first $m-t$ candidates and 0 to the remaining $t$, where $m$ is the total number of candidates, then it is equivalent to the \emph{$t$-veto} or \emph{anti-plurality} rule;
    \item if the scoring function assigns $m-l$ points to the candidate in position $l$ then it is equivalent to the \emph{borda count}, in which each voter ranks the candidates and each candidate gets a score equal to the number of candidates ranked lower in each list.
\end{itemize}

\subsection{Influence Maximization}
\label{ssec:influence}
The \textit{influence maximization} problem studies a social network represented as a graph and has the goal of finding the $B$-sized set of influential nodes that can maximize the spread of information~\cite{kempe2015maximizing}.
In general, all existing diffusion models can be categorized into three classes:
cascade models, threshold models, and epidemic models. The most popular for studying social influence problems are the \emph{Independent Cascade Model} (ICM) and the \emph{Linear Threshold Model} (LTM). 
These models are graph-based, namely they assume an underlying directed graph where nodes represent agents and edges represent connections between them. Each node can be either \emph{active}, that is it spreads the information, or \emph{inactive}. With some probability, active nodes diffuse the information to their neighbors.
The ICM model requires a diffusion probability to be associated with
each edge, whereas LTM requires an influence degree to be defined on each edge and
an influence threshold on each node. 
For both models, the diffusion process proceeds
iteratively in a synchronous way along a discrete time-axis, starting from an initial set of nodes, usually called \emph{seeds}.

In this work we focus on LTM~\cite{kempe2003maximizing}.
Given a graph $G=(V, E)$, in LTM each node $v \in V$ has a threshold $t_v \in [0,1]$ sampled uniformly at random and independently from the other nodes and each edge $(u,v) \in E$ has a weight $b_{uv} \in [0,1]$ with the constraint that, for each $v \in V$, the sum of the weights of the incoming edges of $v$ is less or equal to 1, i.e., $\sum_{(u,v) \in E} b_{uv} \leq 1$. 
Let $A_0 \subseteq V$ be the set of \emph{active} nodes at the beginning of the process.
More in general, let $A_t \subseteq V$ be the set of nodes active at time $t$.
In LTM an inactive node $v$ becomes active if the sum of the weights of the edges coming from nodes that are active at the previous round is greater than or equal to its threshold $t_v$, i.e., $v \in A_t$ if and only if $v \in A_{t-1}$ or $\sum_{u \in A_{t-1} : (u,v) \in E} b_{uv} \geq t_v$.
When a node is active, it influences its neighbors and increases the chance of making them change their preference list.
The process has \emph{quiesced} at the first time $\tilde{t}$ such that the set of active nodes would not change in the next round, i.e., time $\tilde{t}$ is such that $A_{\tilde{t}} = A_{\tilde{t}+1}$.
We define the eventual set of active nodes as $A := A_{\tilde{t}}$.

The distribution of the set of active nodes in the graph starting with $A_0$ under the LTM process is equivalent to the distribution reachable from the same set $A_0$ in the set of random graphs called \textit{live-edge graphs} (Theorem~\ref{teo:kempe_equivalence}). 
A live-edge graph is built as follows: Given an influence graph $G=(V, E)$, for every node $v \in V$ select at most one of its incoming edges with probability proportional to the weight of that edge, i.e., edge $(u,v)$ is selected with probability $b_{uv}$, and no edge is selected with probability $1 - \sum_{u \in N_v} b_{uv}$.
Let us denote by $\mathcal{G}$ the set of all possible live-edge graphs that can be generated from $G$. 
\begin{theorem}[Kempe, Kleinberg, and Tardos~\cite{kempe2015maximizing}]
Given a graph $G=(V, E)$ and an initial set of nodes $A_0 \subseteq V$, the distribution of the sets of \emph{active} nodes in $G$ after LTM has quiesced starting from $A_0$ is equal to the distribution of the sets of nodes that are \emph{reachable} from $A_0$ in the set of live-edge graphs $\mathcal{G}$.
\label{teo:kempe_equivalence}
\end{theorem}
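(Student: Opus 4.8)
The plan is to couple the two random processes via the \emph{principle of deferred decisions} applied to the thresholds, rather than comparing their induced distributions head‑on. Instead of sampling every threshold $t_v$ at the outset, I would reveal each threshold only at the moment it becomes relevant, exploiting the elementary fact that a uniform variable conditioned to exceed a value $w$ is again uniform on $(w,1]$. Concretely, I would argue by induction on the discrete time $t$ that the law of the entire trajectory $A_0, A_1, \dots, A_t$ produced by the threshold dynamics can be reproduced by a process that, independently for each node, lazily commits to at most one incoming edge exactly according to the live‑edge distribution. Because both the thresholds and the per‑node edge choices are independent across nodes, it suffices to match, node by node and step by step, the conditional probability that a node activates and the conditional identity of the neighbor ``responsible'' for that activation.

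For the inductive step, fix a node $v$ still inactive at time $t-1$, let $S$ be its set of already‑active in‑neighbors, and write $w(S)=\sum_{u\in S}b_{uv}$. Conditioning on the whole history up to time $t-1$ reduces, for $v$, to conditioning on the single event $t_v > w(S)$, under which $t_v$ is uniform on $(w(S),1]$. If the in‑neighbors that activate exactly at step $t$ contribute additional weight $\delta$, then $v$ activates at step $t$ iff $t_v \le w(S)+\delta$, an event of conditional probability $\delta/(1-w(S))$. The core computation is to show that this is precisely the probability that, in the lazy live‑edge selection, $v$'s committed incoming edge originates from a node activated at step $t$, given that it did not originate from any earlier‑active node: since edge $(u,v)$ is selected with probability $b_{uv}$ and the remaining, as‑yet‑uncommitted, mass is $1-w(S)$, this conditional probability is again $\delta/(1-w(S))$, and the endpoint is distributed proportionally to the weights $b_{uv}$ over the newly active in‑neighbors. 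Matching these two conditional laws identifies, under the coupling, the activation of $v$ at step $t$ with the event that $v$'s committed edge points back to a node reached by step $t-1$.

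Having established the step‑by‑step equivalence, I would finish by running the induction to quiescence. The processes reach a fixed point in at most $|V|$ steps, and at every finite time the two coupled constructions are equidistributed; hence the eventual active set $A$ equals, under the coupling, the set of nodes reachable from $A_0$ along the committed edges, i.e.\ reachable in the live‑edge graph $H \in \mathcal{G}$ drawn from the product distribution over the per‑node edge choices. Taking distributions of both sides yields that the law of $A$ coincides with the law of the reachable set, which is exactly the claim.

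I expect the main obstacle to be the bookkeeping of the conditioning in the inductive step: one must verify carefully that conditioning on the full threshold‑dynamics history is equivalent, for each node, to conditioning only on the inequality $t_v > w(S)$ (legitimate thanks to independence across nodes, but it needs a clean statement), and that the lazy live‑edge commitments can be carried out in an order consistent with the activation layers without disturbing the product structure of the edge choices. The delicate point making the coupling exact is ensuring that ``never committing a node's edge to a source that is still inactive'' introduces no bias, so that the partially revealed edge distribution remains faithful to the true live‑edge law.
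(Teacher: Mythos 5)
The paper does not prove this statement at all: it is quoted verbatim as a known background result of Kempe, Kleinberg, and Tardos, with the citation standing in for the proof. Your argument correctly reconstructs the original deferred-decisions proof from that reference --- the matching of the conditional activation probability $\delta/(1-w(S))$ with the conditional probability that the lazily committed live edge of $v$ points to a newly activated in-neighbour is exactly the key computation there, and the layered induction to quiescence (together with your own caveat about revealing edge choices only partially, which is resolved by the product structure of the per-node choices) is sound. So the proposal is correct and coincides with the standard proof of the cited theorem.
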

Moreover, under the live-edge model, the problem of selecting the initial set of nodes in order to maximize the diffusion is \emph{submodular}~\cite{kempe2015maximizing}.
Therefore, exploiting a classical result~\cite{NWF78}, the influence maximization problem can be approximated to a constant factor of $1 - 1/e$ using a simple greedy hill-climbing
approach that starts with an empty solution and, for $B$ iterations, selects a single node that gives the maximal marginal gain on the objective function with respect the solution computed so far.
This algorithm guarantees the best approximation, but is still very computational expensive: Evaluating the expected number of active nodes is \mbox{$\mathit{\#P}$-hard}~\cite{chen2010scalable}.
There exists a simulation-based approach in which a Monte-Carlo simulation is performed to evaluate the influence spread of a given seed set $A$~\cite{kempe2015maximizing}.
The standard Chernoff-Hoeffding bounds imply $1\pm\epsilon$ approximation to the expected number of active nodes by simulating a polynomial number of times the diffusion process.

\section{Linear Threshold Ranking and Election Control}
\label{sec:problem}

We consider the scenario in which a set of candidates are running for the elections and a social network of voters will decide the winner. 
In particular we focus on the simple \emph{plurality rule} and on the more general case of the \emph{scoring rule}.

Some attacker could be interested in changing the outcome of the elections by targeting a subset of voters with advertisement or (possibly fake) news about one specific candidate.
Such voters, with some probability, can influence their friends by sharing the news.
Suppose the attacker has a budget that can use to target some voters and that they will start a diffusion process that changes opinions in the network.
Is it possible for the attacker to select a set of voters in order to have constructive/destructive control over a target candidate, i.e., to change the voters' opinions on this candidate in order to maximizes his chances to win/lose the elections?

More formally, let $G = (V, E)$ be a directed graph representing the underlying social network of people willing to vote. 
For each node $v \in V$ we call $N_v$ the sets of incoming neighbors of $v$.
Let $C = \{c_1, \ldots, c_m\}$ be a set of $m$ candidates nominated for the elections; we refer to our \emph{target candidate}, i.e., the one that we want to make win/lose the elections, as $\cstar$.
Each $v \in V$ has a permutation $\pi_v$ of $C$, i.e., its list of preferences for the elections;
we denote the position of candidate $c_i$ in the preference list of node $v$ as $\pi_v(c_i)$.

We consider the LTM process starting from an initial set of active nodes $A_0 \subseteq V$. 
Recall that, according to LTM, each node $v \in V$ has a threshold $t_v$, each edge $(u,v) \in E$ has a weight $b_{uv}$, and that $A \subseteq V$ is the set of active nodes at the end of the process.

Let $B \in \N$ be an initial budget that can be used to select the nodes in $A_0$, i.e., the set of active nodes from which the LTM process starts.
In particular, the budget constrains the size of $A_0$, namely $|A_0| \leq B$.

After the LTM process has quiesced, the position of $\cstar$ in the preference list of each node changes according to a function of its incoming active neighbors.
The threshold $t_v$ of each node $v \in V$ models its strength in retaining its original opinion about candidate $\cstar$:
The higher is the threshold $t_v$ the lower is the probability that $v$ is influenced by its neighbors.
Moreover the weight on an edge $b_{uv}$ measures the influence that node $u$ has on node $v$.
Taking into account the role of such parameters, we define the number of positions that $\cstar$ goes up in $\pi_v$ as 
\[
\pi^\uparrow_v(\cstar) :=
\min \left(
\pi_v(\cstar) - 1,
\,
\Floor{
    \frac{\alpha(\pi_v(\cstar))}{t_v} \sum_{\substack{u \in A,\,(u,v) \in E}} b_{uv}
}
\right),
\]
where $\alpha : \{1,\ldots,m\} \rightarrow [0,1]$ is a function that depends on the position of $\cstar$ in $\pi_v$ and models the rate at which $\cstar$ shifts up.
Note that $\alpha$ can be set arbitrarily to model different scenarios, e.g., shifting up of one position from the bottom of the list could be easier than going from the second position to the first with a suitable choice of $\alpha$.
As for $\pi^\uparrow_v(\cstar)$, it can be any integer value in $\{0,\ldots,\pi_v(\cstar)-1\}$:
The floor function guarantees a positive integer value; the minimum between such value and $\pi_v(\cstar)-1$ guarantees that final position of $\cstar$ is at least 1, since the floor function could output too high values when the threshold is small w.r.t.\ the neighbors' influence.
We call this process the \emph{Linear Threshold Ranking} (\LTM).

After the modification of the lists at the end of \LTM, the candidates might have a new position in the preference list of each node $v \in V$; we denote such new preference list as $\tilde{\pi}$.
In particular, the new position of candidate $\cstar$ will be $\tilde{\pi}_v(\cstar) := \pi_v(\cstar) - \pi^\uparrow_v(\cstar)$; 
the candidates that are overtaken by $\cstar$ will shift one position down.

In the problem of \emph{election control} we want to maximize the chances of the target candidate to win the elections under \LTM.
To achieve that, we maximize its expected \emph{Margin of Victory} (\MOV) w.r.t.\ the most voted opponent, akin to that defined in~\cite{wilder2018controlling}.%
\footnote{We actually study the \emph{change} in the margin -- not just the margin -- to have well defined approximation ratios also when the margin is negative.}
Let us consider the general case of the \emph{scoring rule}, where a \emph{nonincreasing} \emph{scoring function} $f : \{1,\ldots,m\} \rightarrow \N$ assigns a score to each position.
Let $c$ and $\tilde{c}$ respectively be the candidates, different from $\cstar$, with the highest score before and after \LTM.
Let
\begin{align}\label{eq:mu0}
\mu(\emptyset) &:= \sum_{v\in V} f(\pi_v(c)) - f(\pi_v(\cstar))
\\\label{eq:muA}
\mu(A_0) &:= \sum_{v\in V} f(\tilde{\pi}_v(\tilde{c})) - f(\tilde{\pi}_v(\cstar))
\end{align}
be the \emph{margin} (i.e., difference in score) between the most voted opponent and $\cstar$ before and after \LTM (Equations~\eqref{eq:mu0} and~\eqref{eq:muA}). 
Thus, the \emph{election control} problem is formalized as that of finding a set of nodes $A_0$ such that
\[
\begin{array}{rl}
    \max_{A_0} & \Ex{\MOV(A_0)} := \Ex{\mu(\emptyset) - \mu(A_0)} \\
    \text{s.t.} & |A_0| \leq B,
\end{array}
\]
namely to find an initial set of seed nodes of at most size $B$ that maximizes the expected \MOV, i.e., change in margin.%
\footnote{Note that \MOV is always positive since the scoring function $f$ is nonincreasing.}

\begin{algorithm}[t]
\caption{\textsc{Greedy}}\label{alg:greedy}
\begin{algorithmic}[1]
\Require{Social graph $G=(V,E)$; Budget $B$; Score function $F$}
\State $A_0 = \emptyset$
\While{$|A_0| \leq B$}
\State $v = \argmax_{w \in V \setminus A_0} F(A_0 \cup \{w\}) - F(A_0)$ 
\State $A_0 = A_0 \cup \{ v \}$
\EndWhile
\State \Return{$A_0$}
\end{algorithmic}
\end{algorithm}

To solve the problem we focus on the score of the target candidate.
Let us define
\begin{align}\label{eq:F0}
F(\emptyset) &:= \sum_{v\in V}f(\pi_v(\cstar))
\\\label{eq:FA}
F(A_0) &:= \Ex{\sum_{v\in V}f(\tilde{\pi}_v(\cstar))}
\end{align}
as the total expected score obtained by candidate $c_\star$ before and after \LTM (Equations~\eqref{eq:F0} and~\eqref{eq:FA}).
In Sections~\ref{sec:plurality} and~\ref{sec:scoring} we prove that the score of the target candidate is a monotone submodular function w.r.t.\ the initial set of seed nodes $A_0$ in both the \emph{plurality} and the \emph{scoring} rule; 
this allows us to get a $(1-1/e)$-approximation of the maximum score through the use of \textsc{Greedy} (Algorithm~\ref{alg:greedy}).
Note that maximizing the score of the target candidate is a $\mathit{NP}$-hard problem:
Consider the case in which there are only two candidates, $\alpha(1) = \alpha(2) = 1$, all nodes have $\cstar$ as second preference, and the scoring function is that of the plurality rule; 
maximizing the score is equal to maximizing the number of active nodes in LTM because when a node becomes active the target candidate shifts of at least one position up (in this case, in first position); thus the two problems are equivalent.
Since influence maximization in LTM is $\mathit{NP}$-hard, then also maximizing the score in \LTM is $\mathit{NP}$-hard because it is a generalization of LTM. Moreover, in this instance, the maximum value of \MOV is equal to twice the maximum score; then the problem of maximizing \MOV is also $\mathit{NP}$-hard.

Although maximizing the score is not equivalent to maximizing \MOV, in Section~\ref{sec:approx} we show that it gives a constant factor approximation to \MOV.
Finally, in Section~\ref{sec:destructive}, we consider the problem of \emph{destructive control}, in which we want the target candidate to lose the elections. 
We prove a constant factor approximation to \MOV also in this case by exploiting a simple reduction that maps it to the constructive case.

\section{Maximizing the Score: Plurality Rule}
\label{sec:plurality}
As a warm-up, in this section we focus on the \emph{plurality rule}. 
We give an algorithm to select an initial set of seed nodes to maximize the expected number of nodes that will change their opinion and have $\cstar$ as first preference at the end of \LTM.

Given a set of initially active nodes $A_0$, let $A$ be the set of nodes that are active at the end of the process. 
An active node $v$ with $\pi_v(\cstar)>1$ will have $\cstar$ as first preference if $\pi^\uparrow_v(\cstar) = \pi_v(\cstar)-1$, that is if and only if 
\[
    \frac{\alpha(\pi_v(\cstar))}{t_v} \sum_{u \in A\cap N_v} b_{uv} \geq \pi_v(\cstar)-1
\]
or, equivalently,
\[ 
t_v \leq \frac{\alpha(\pi_v(\cstar))}{\pi_v(\cstar) - 1} \sum_{u \in A\cap N_v} b_{uv}.
\]

As for the influence maximization problem, we define an alternative random process based on live-edge graphs. 
One possibility could be the following: 
For each live-edge graph evaluate which active nodes satisfy the above formula; 
however, in the live-edge graph process, we don't know the value of $t_v$ since they are sampled uniformly at random at the beginning of LTM. 
To overcome this limitation we introduce a new process, \emph{Live-edge Coin Flip} (\TCT).
\begin{definition}\label{def:lcf}
(\emph{Live-edge Coin Flip process})
\begin{enumerate}
    \item Each node $v \in V$ selects at most one of its incoming edges with probability proportional to the weight of that edge, i.e., edge $(u,v)$ is selected with probability $b_{uv}$, and no edge is selected with probability $1 - \sum_{u \in N_v} b_{uv}$.
    \item Each node $v$ with $\pi_v(\cstar)>1$ that is reachable from $A_0$ in the live-edge graph flips a biased coin and changes its list according to the outcome.
    This is equivalent of picking a random real number $s_v \in [0,1]$ and setting the position of $\cstar$ according to $s_v$ as follows:
    If $s_v \leq \frac{\alpha(\pi_v(\cstar))}{\pi_v(\cstar) - 1}$, node $v$ chooses $\cstar$ as its first preference (i.e., it sets $\tilde{\pi}_v(\cstar)=1$ and shifts all the other candidates down by one position); 
    otherwise, $v$ maintains its original ranking.
\end{enumerate}
\end{definition}
In the following we show that the two processes are equivalent, i.e., starting from any initial set $A_0$ each node in the network has the same probability to end up with $\cstar$ in first position in both processes. 
This allows us to compute the function $F(A_0)$, for a given $A_0$, by solving a reachability problem in graphs, as we will show later in this section. 
We first prove the next Lemma which will be used to show the equivalence between the two processes and to compute $F(A_0)$. 
The lemma shows how to compute the probability that a node $v$ is reachable from $A_0$ at the end of the \TCT process by using the live-edge graphs or by using the probability of the incoming neighbors of $v$ to be reachable from $A_0$.

We denote by $\mathcal{G}$ the set of all possible live-edge graphs sampled from $G$. 
For every $G'=(V,E') \in \mathcal{G}$ we denote by $\Pr{G'}$ the probability that the live edge graph is sampled, namely
\[
\Pr{G'} = \prod_{v : (u,v) \in E'} b_{uv}
\prod_{v : \not\exists (u,v) \in E'} \left( 1 - \sum_{w : (w,v) \in E} b_{wv} \right).
\]
We denote by $R(A_0)$ the set of nodes reachable from $A_0$ at the end of the \TCT process and by $R_{G'}(A_0)$ the set of nodes reachable from $A_0$ in a fixed live-edge graph $G'$ and by $\bmone_{(G', v)}$ the indicator function that is 1 if $v\in R_{G'}(A_0)$ and 0 otherwise.

\begin{lemma}\label{lem:reachability}
Given a set of initially active nodes $A_0$, let $R(A_0)$ be the set of nodes reachable from $A_0$ at the end of the \TCT process. Then 
\begin{align*}
\Pr{v\in R(A_0)} = \sum_{G' \in \mathcal{G}} \Pr{G'}\cdot \bmone_{(G', v)} = \sum_{U \subseteq N_v} \sum_{u \in U} b_{uv} \cdot \Pr{( R(A_0) \cap N_v) = U}.
\end{align*}
\end{lemma}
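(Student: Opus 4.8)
The plan is to establish the two equalities separately; the first is bookkeeping and the second carries the content.

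For the first equality, observe that the reachable set $R(A_0)$ is a deterministic function of the live-edge graph sampled in Step~1 of the \TCT process: the coin flips of Step~2 only relabel the preference lists and leave reachability untouched. Conditioning on the sampled graph and using that the events in which each $G'\in\mathcal{G}$ is drawn form a partition of the probability space, one gets
\[
\Pr{v\in R(A_0)} = \sum_{G'\in\mathcal{G}}\Pr{G'}\cdot\Pr{v\in R(A_0)}{G'} = \sum_{G'\in\mathcal{G}}\Pr{G'}\cdot\bmone_{(G',v)},
\]
because, once $G'$ is fixed, $\Pr{v\in R(A_0)}{G'}$ is exactly the indicator $\bmone_{(G',v)}$.

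For the second equality I would assume $v\notin A_0$ (if $v\in A_0$ then $v\in R(A_0)$ surely and the statement is immediate) and partition the probability space according to the set $U:=R(A_0)\cap N_v$ of reachable in-neighbours of $v$. Two elementary facts drive the computation. First, since in a live-edge graph every node keeps at most one incoming edge, any path from $A_0$ to $v$ must terminate with the unique edge $(u,v)$ selected by $v$; hence $v\in R(A_0)$ if and only if $v$ selects an edge $(u,v)$ whose tail $u$ is reachable. Second, for distinct $u\in N_v$ the events ``$v$ selects $(u,v)$'' are mutually exclusive and each has probability $b_{uv}$, so the probability that $v$ selects some edge into $U$ equals $\sum_{u\in U}b_{uv}$. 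Summing the product of these two quantities over all $U\subseteq N_v$ would yield the claimed formula, provided the edge $v$ selects is independent of which of its in-neighbours are reachable.

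Establishing that independence is the main obstacle, and it is genuinely subtle because the reachability of an in-neighbour $u$ can depend on $v$'s own selection whenever a directed cycle passes through $v$. The way I would resolve it is to test the reachability of the in-neighbours only along paths that avoid $v$: using once more that $v$ retains a single incoming edge, if $v$ is reached through $(u,v)$ then any witnessing path to $u$ can be truncated at its first visit to $u$, producing a path to $u$ that never traverses $v$. Consequently $v$ is reachable exactly when its selected edge leads to an in-neighbour reachable in the live-edge graph with $v$ removed, and this latter reachability is a function of the edge selections of the nodes other than $v$ alone; it is therefore independent of $v$'s choice. With reachability read in this $v$-avoiding sense the factorisation $\bigl(\sum_{u\in U}b_{uv}\bigr)\cdot\Pr{R(A_0)\cap N_v=U}$ becomes legitimate, and summing over $U\subseteq N_v$ closes the argument.
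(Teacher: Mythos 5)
Your treatment of the first equality coincides with the paper's (condition on the sampled live-edge graph; the conditional probability is the indicator $\bmone_{(G',v)}$). For the second equality you take a genuinely different route: the paper regroups $\sum_{G'}\Pr{G'}\bmone_{(G',v)}$ according to the value of $R_{G'}(A_0)\cap N_v$ and then uses an edge-switching map $G'\mapsto G''$ that replaces $v$'s incoming edge by $(u,v)$, whereas you argue directly that $v$ is reachable iff it selects an edge from a reachable in-neighbour and then try to factor the two events. To your credit, you have put your finger on exactly the step that is delicate --- the independence of $v$'s own selection from the event $\{R(A_0)\cap N_v=U\}$ --- which the paper's proof silently assumes: its claim that $R_{G''}(A_0)\cap N_v=U$ ``since all the other edges of $G''$ are equal to those of $G'$'' fails precisely when some in-neighbour of $v$ is reachable only through $v$.

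The problem is that your resolution does not prove the identity as stated. Reading reachability of $N_v$ in the $v$-avoiding sense does restore independence and yields a correct formula, but that formula is not the one in the lemma, and the two do not agree in general. Take $V=\{a,v,u\}$, $A_0=\{a\}$, edges $(a,v)$, $(u,v)$, $(v,u)$ with weights $b_{av},b_{uv},b_{vu}>0$, $b_{av}+b_{uv}\le 1$. Then $\Pr{v\in R(A_0)}=b_{av}$ (if $v$ selects $(u,v)$ it can never be reached, since $u$'s only possible predecessor is $v$), while the stated right-hand side evaluates to
\begin{align*}
b_{av}\bigl(1-b_{av}b_{vu}\bigr)+(b_{av}+b_{uv})\,b_{av}b_{vu}=b_{av}(1+b_{uv}b_{vu}),
\end{align*}
because $\Pr{R(A_0)\cap N_v=\{a,u\}}=b_{av}b_{vu}$. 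Your $v$-avoiding version gives $b_{av}$, the correct value. So the remaining gap in your write-up is the unjustified (and in general false) identification of the $v$-avoiding reachable in-neighbours with $R(A_0)\cap N_v$; to close it you must either restrict to instances where no in-neighbour of $v$ is reachable only through $v$ (e.g.\ acyclic influence graphs) or restate the lemma with the $v$-avoiding reachable set --- which is also what would be needed to repair the paper's own argument and its downstream use in Theorem~\ref{th:plurality}.
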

\begin{proof}
By the law of total probability
\[
\Pr{v\in R(A_0)} =  \sum_{G' \in \mathcal{G}}\Pr{v\in R(A_0)}{G'}\cdot\Pr{G'}.
\]
Given a live-edge graph $G'$ sampled form $\mathcal{G}$, the value of $\Pr{v\in R}{G'}$ is equal to 1 if $v$ is reachable from $A_0$ in $G'$, and it is 0 otherwise. Then
\[
\sum_{G' \in \mathcal{G}}\Pr{v\in R(A_0)}{G'}\cdot\Pr{G'} = \sum_{G' \in \mathcal{G}} \Pr{G'}\cdot \bmone_{(G', v)}.
\]
which shows the first part of the lemma. We now show the following equality:
\begin{equation}\label{eq:plurality:one}
\sum_{G' \in \mathcal{G}} \Pr{G'}\cdot \bmone_{(G', v)}  = \sum_{U \subseteq N_v} \sum_{u \in U} b_{uv} \cdot \Pr{( R(A_0) \cap N_v) = U}.
\end{equation}
We can re-write the left hand side as
\begin{equation*}
\sum_{G' \in \mathcal{G}} \Pr{G'} \cdot \bmone_{(G', v)} = \sum_{U \subseteq N_v} \sum_{\substack{G' \in \mathcal{G} \text{ s.t.}\\ R_{G'}(A_0)\cap N_v =U}} \Pr{G'} \cdot \bmone_{(G', v)}.
\end{equation*}

In each live-edge graph $G'$ for which $\Pr{G'} \cdot \bmone_{(G', v)} \neq 0$ node $v$ selected one of its incoming edges and then
\(\Pr{v \mbox{ selected } u \mbox{ in } \TCT} = b_{uv},\) 
for each $u\in N_v$. 
Therefore,  the above value is equal to
\begin{align*}
&\sum_{U \subseteq N_v} \sum_{\substack{G' \in \mathcal{G} \text{ s.t.}\\ R_{G'}(A_0)\cap N_v =U}} \sum_{u\in U} \Pr{G'}{v \mbox{ selected $u$ in } \TCT }b_{uv}
\\
&=\sum_{U \subseteq N_v} \sum_{u\in U} b_{uv}\sum_{\substack{G' \in \mathcal{G} \text{ s.t.}\\ R_{G'}(A_0)\cap N_v =U}}  \Pr{G'}{v \mbox{ selected $u$ in } \TCT },
\end{align*}
where the first equality is due to the law of total probability and the last one is just reordering of the terms of the sums.

In each live-edge $G'$ that does not contain the edge $(u,v)$, the probability \(\Pr{G'}{v \mbox{ selected $u$ in } \TCT }\) is equal to zero. Then,
\begin{align*}
&\sum_{\substack{G' \in \mathcal{G} \text{ s.t.}\\ R_{G'}(A_0)\cap N_v =U}}  \Pr{G'}{v \mbox{ selected $u$ in } \TCT }
\\
&=\sum_{\substack{G' \in \mathcal{G} \text{ s.t.}\\ R_{G'}(A_0) \cap N_v =U\\(u,v)\in E'}}  \Pr{G'}{v \mbox{ selected $u$ in } \TCT }.
\end{align*}
By definition of conditional probability we have that the above sum is equal to:
\begin{equation*}
\sum_{\substack{G' \in \mathcal{G} \text{ s.t.}\\ R_{G'}(A_0) \cap N_v =U\\(u,v)\in E'}}  \frac{\Pr{G'\cap (v \mbox{ selected $u$ in } \TCT) }}{b_{uv}}.
\end{equation*}
Since, in each $G'$ considered in the sum, edge $(u,v)$ belongs to $G'$, this is equal to:
\begin{equation}\label{eq:plurality:two}
\sum_{\substack{G' \in \mathcal{G} \text{ s.t.}\\ R_{G'}(A_0) \cap N_v =U\\(u,v)\in E'}}  \frac{\Pr{G'}}{b_{uv}}.
\end{equation}

Let us now consider the right hand side of Equality~\eqref{eq:plurality:one}. For each $U\subseteq N_v$, the probability that $( R(A_0) \cap N_v) = U$ is given by the sum of the probabilities of all the live-edge graphs that satisfy this property, since these graphs represent disjoint events, we have:
\begin{equation*}
\Pr{( R(A_0) \cap N_v) = U} = \sum_{\substack{G' \in \mathcal{G} \text{ s.t.}\\ R_{G'}(A_0) \cap N_v =U}}  \Pr{G'}.
\end{equation*}
Let us fix a node $u\in U$. For each $G' \in \mathcal{G}$ such that $(R_{G'}(A_0)\cap N_v) = U$, there exists a live-edge graph $G''$ that has the same edges as $G'$ but has edge $(u,v)$ as incoming edge of $v$. Since all the other edges of $G''$ are equal to those of $G'$, then $(R_{G''}(A_0) \cap N_v) =U$.

\noindent We have that 
\begin{equation*}
\Pr{G'} = \left\{
\begin{array}{ll}\displaystyle
\frac{\Pr{G''}}{b_{uv}}\cdot b_{u_i v} &\mbox{if } \exists (u_i, v) \text{ in } G',
\\\displaystyle
\frac{\Pr{G''}}{b_{uv}}\cdot \left(1- \sum_{u_i\in N_v}b_{u_i v}\right)&\text{otherwise.}
\end{array}
\right.
\end{equation*}
Therefore,
\small{
\begin{align*}
\sum_{\substack{G' \in \mathcal{G} \text{ s.t.}\\ R_{G'}(A_0) \cap N_v =U}}  \Pr{G'} 
&=\sum_{\substack{G'' \in \mathcal{G} \text{ s.t.}\\ R_{G''}(A_0) \cap N_v =U\\(u,v)\in E'}}
\left(  \sum_{u_i\in N_v}b_{u_iv} \frac{\Pr{G''}}{b_{uv}}
+ \left(1-\sum_{u_i\in N_v}b_{u_iv}\right)\frac{\Pr{G''}}{b_{uv}}
\right)
\\
&=\sum_{\substack{G'' \in \mathcal{G} \text{ s.t.}\\ R_{G''}(A_0) \cap N_v =U\\(u,v)\in E'}}  \frac{\Pr{G''}}{b_{uv}}.
\end{align*}}
Equality~\eqref{eq:plurality:one} follows since the above expression is equal to~\eqref{eq:plurality:two}.
\end{proof}

The next theorem shows the equivalence between $\LTM$ and $\TCT$.
\begin{theorem}\label{th:plurality}
Given a set of initially active nodes $A_0$, let $A'_{\LTM}$ and $A'_{\TCT}$ be the set of nodes such that $\tilde{\pi}_v(\cstar) = 1$ at the end of \LTM and \TCT, respectively, both starting from $A_0$. 
Then, for each $v\in V$, $\Pr{v\in A'_{\LTM}} = \Pr{v\in A'_{\TCT}}$.
\end{theorem}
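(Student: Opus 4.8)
The plan is to establish the per-node identity by analyzing the two processes separately and showing that both marginals equal $\frac{\alpha(\pi_v(\cstar))}{\pi_v(\cstar)-1}\cdot\Pr{v \in R(A_0)}$, after which Lemma~\ref{lem:reachability} closes the argument. Fix a node $v$ and abbreviate $\beta_v := \frac{\alpha(\pi_v(\cstar))}{\pi_v(\cstar)-1}$, noting that $\beta_v \le 1$ because $\alpha \le 1$ and $\pi_v(\cstar)-1 \ge 1$. The case $\pi_v(\cstar)=1$ is immediate: $\cstar$ already occupies the first position, so $v \in A'_{\LTM}$ and $v \in A'_{\TCT}$ each with probability $1$; hence I assume $\pi_v(\cstar) > 1$ from now on.

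For the \TCT side the computation is direct. A node $v$ lands in $A'_{\TCT}$ precisely when it is reachable from $A_0$ in the sampled live-edge graph \emph{and} its coin satisfies $s_v \le \beta_v$. Since $s_v$ is drawn independently of the edge sampling, this yields $\Pr{v \in A'_{\TCT}} = \beta_v \cdot \Pr{v \in R(A_0)}$.

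For the \LTM side I first isolate the relevant event. By the derivation at the start of this section, $v$ has $\cstar$ in first position iff $t_v \le \beta_v \sum_{u \in A \cap N_v} b_{uv}$, and this already forces $v \in A$: were $v$ inactive, quiescence would give $\sum_{u \in A \cap N_v} b_{uv} < t_v$, whence $\beta_v \sum_{u \in A \cap N_v} b_{uv} \le \sum_{u \in A \cap N_v} b_{uv} < t_v$, contradicting the inequality. I would then condition on the realized active in-neighborhood $A \cap N_v = S$ and argue that the conditional probability of the threshold event equals exactly $\beta_v \sum_{u \in S} b_{uv}$. Summing over $S$ gives
\[
\Pr{v \in A'_{\LTM}} = \beta_v \sum_{S \subseteq N_v} \Big( \sum_{u \in S} b_{uv} \Big)\, \Pr{A \cap N_v = S}.
\]
Invoking Theorem~\ref{teo:kempe_equivalence} to replace the law of $A \cap N_v$ by that of $R(A_0) \cap N_v$, and then Lemma~\ref{lem:reachability} to recognize the resulting sum as $\Pr{v \in R(A_0)}$, I obtain $\Pr{v \in A'_{\LTM}} = \beta_v \Pr{v \in R(A_0)} = \Pr{v \in A'_{\TCT}}$.

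The main obstacle is precisely the conditioning step on the \LTM side: proving that, given $A \cap N_v = S$, the event $t_v \le \beta_v \sum_{u \in S} b_{uv}$ has probability exactly $\beta_v \sum_{u \in S} b_{uv}$. The delicacy is that $t_v$ plays a double role — it governs both whether $v$ activates and, through the floor term, whether $\cstar$ shifts — so conditioning on the active in-neighborhood is not manifestly independent of $t_v$. I would resolve this with a deferred-decision coupling in the spirit of the proof of Theorem~\ref{teo:kempe_equivalence}: reveal the thresholds of the remaining nodes first, so that $S$ is determined while $t_v$ is still a fresh uniform variable on $[0,1]$, giving $\Pr{t_v \le \beta_v \sum_{u \in S} b_{uv}} = \beta_v \sum_{u \in S} b_{uv}$ since $\beta_v \sum_{u \in S} b_{uv} \le 1$. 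Making this revelation order consistent with the live-edge sampling that underlies Lemma~\ref{lem:reachability} is exactly where the equivalence between \LTM and \TCT is forged, and it is the step I expect to require the most care.
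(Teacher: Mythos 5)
Your proposal is correct relative to the paper and takes essentially the same route: both probabilities are reduced to $\frac{\alpha(\pi_v(\cstar))}{\pi_v(\cstar)-1}\cdot\Pr{v\in R(A_0)}$, the \TCT side directly from the independence of the coin flip, and the \LTM side by conditioning on the realized active in-neighborhood $A\cap N_v=U$, applying the law of total probability, and then invoking Theorem~\ref{teo:kempe_equivalence} and Lemma~\ref{lem:reachability}. The one step you flag as delicate and leave as a sketch --- that $t_v$ remains uniform conditional on $A\cap N_v=U$, so that the threshold event has conditional probability exactly $\frac{\alpha(\pi_v(\cstar))}{\pi_v(\cstar)-1}\sum_{u\in U}b_{uv}$ --- is precisely the step the paper asserts in a single line without further justification, so you are not missing anything the paper actually supplies; indeed your observation that $t_v$ plays a double role (it governs $v$'s activation, which can feed back into $N_v$ through cycles, as well as the size of the shift) identifies a genuine subtlety that the paper's proof silently glosses over.
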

\begin{proof}
We exclude from the analysis nodes $v$ with $\pi_v(\cstar)=1$ since they keep their original ranking in both models.
Let us start by analyzing the \LTM process.
Let $A$ be the set of active nodes at the end of the \LTM process that starts from $A_0$.
If $U$ is the maximal subset of active neighbors of $v$ (i.e. $U =A\cap N_v$), then we can write the probability that $v\in A_{\LTM}'$ given $U$, as
\begin{align*}
\Pr{v \in A_{\LTM}'}{(A\cap N_v) = U} 
&= \Pr{t_v \leq \frac{\alpha(\pi_v(\cstar))}{\pi_v(\cstar) - 1} \sum_{u \in U} b_{uv}} 
\\
&= \frac{\alpha(\pi_v(\cstar))}{\pi_v(\cstar) - 1} \sum_{u \in U} b_{uv}.
\end{align*}
The overall probability that $v\in  A_{\LTM}'$ is
\begin{align*}
\Pr{v \in A_{\LTM}'} 
&= \sum_{U\subseteq N_v} \Pr{v \in A_{\LTM}'}{(A\cap N_v) = U}\cdot\Pr{U =(A\cap N_v)}
\\
&= \frac{\alpha(\pi_v(\cstar))}{\pi_v(\cstar) - 1} \sum_{U\subseteq N_v} \sum_{u \in U} b_{uv} \cdot \Pr{(A\cap N_v) = U}.
\end{align*}

Let us now analyze the \TCT process. In order for $v$ to be in $A'_{\TCT}$ it must hold that the coin toss has a positive outcome and that $v\in R$. Thus,
\begin{equation}\label{eq:plurality:three}
    \Pr{v\in A'_{\TCT}} = \frac{\alpha(\pi_v(\cstar))}{\pi_v(\cstar) - 1} \Pr{v\in R(A_0)}.
\end{equation}
By Lemma~\ref{lem:reachability}, we have
\[
\Pr{v\in A'_{\TCT}} = \frac{\alpha(\pi_v(\cstar))}{\pi_v(\cstar) - 1}  \sum_{U \subseteq N_v} \sum_{u\in U} b_{uv}\cdot\Pr{(R(A_0) \cap N_v) = U}.
\]
By Theorem~\ref{teo:kempe_equivalence}, $\Pr{(R(A_0) \cap N_v) = U}= \Pr{(A\cap N_v) = U}$, and hence the theorem follows.
\end{proof}
We now exploit Theorem~\ref{th:plurality} to show how to compute the value of $F(A_0)$. For each positive integer $r\leq m$, we denote by $V_{c_i}^r$ the set of nodes that have candidate $c_i$ in position $r$. In the case of plurality rule, $F(A_0)$ is the expected cardinality of $A'_{\LTM}$, that is
\[
  F(A_0) = \Ex{\card{A'_{\LTM}}} = \sum_{v \in V}\Pr{v\in A'_{\LTM}}.
\]
By Theorem~\ref{th:plurality} and Equality~\eqref{eq:plurality:three}, this is equal to
\[
\sum_{v \in V}\Pr{v\in A'_{\TCT}} = F(\emptyset) +  \sum_{v \in V,\,\pi_v(\cstar)>1} \frac{\alpha(\pi_v(\cstar))}{\pi_v(\cstar) - 1} \Pr{v\in R(A_0)}.
\]
By Lemma~\ref{lem:reachability}, it follows that 
\[
F(A_0) =F(\emptyset) + \sum_{v \in V,\pi_v(\cstar)>1} \frac{\alpha(\pi_v(\cstar))}{\pi_v(\cstar) - 1} \sum_{G' \in \mathcal{G}} \Pr{G'} \cdot\bmone_{(G',v)}.
\]
We can rewrite the above formula as follows:
\begin{align*}
F(A_0) -F(\emptyset) 
&=\sum_{r=2}^{m} \sum_{v: \pi_v(\cstar)=r} \frac{\alpha(r)}{r - 1} \sum_{G' \in \mathcal{G}} \Pr{G'} \cdot\bmone_{(G',v)}
\\
&=\sum_{r=2}^{m} \frac{\alpha(r)}{r - 1} \sum_{G' \in \mathcal{G}} \Pr{G'} \sum_{v: \pi_{v}(\cstar)=r} \bmone_{(G',v)}
\\
&= \sum_{r=2}^{m} \frac{\alpha(r)}{r - 1} \sum_{G' \in \mathcal{G}} \Pr{G'} \cdot \card{\{v : v \in R_{G'}(A_0) \wedge \pi_v(\cstar) = r\}}
\\
&= \sum_{r=2}^{m} \frac{\alpha(r)}{r - 1} \sum_{G' \in \mathcal{G}} \Pr{G'} |R_{G'}(A_0, V^r _{\cstar})|,
\end{align*}
where, for a graph $G'\in \mathcal{G}$ and a positive integer $r\leq m$, we denoted by $R_{G'}(A_0, V^r _{\cstar})$ the subset of $V^r _{\cstar}$ of nodes reachable from a set of nodes $A_0$ in $G'$, $R_{G'}(A_0, V^r _{\cstar}) = \{v : v \in R_{G'}(A_0)\wedge \pi_v(\cstar) = r\}$.

It follows that the function $F(A_0)$ is a non-negative linear combination of functions $|R_{G'}(A_0, V^r _{\cstar})|$. In the next lemma, we show that $R_{G'}(A_0, V^r _{\cstar})$ in $G'$ is a monotone submodular%
\footnote{For a ground set $N$, a function $z:2^N\rightarrow \mathbb{R}$ is \emph{submodular} if for any two sets $S,T$ such that $S\subseteq T \subseteq N$ and for any element $e\in N\setminus T$ it holds that $z(S\cup\{e\}) - z(S) \geq z(T\cup \{e\}) - z(T)$.}
function of the initial set of nodes $A_0$. This implies that also $F(A_0)$ is monotone and submodular w.r.t.\ $A_0$ and the same holds for $F(A_0) - F(\emptyset)$. Therefore, we can use \textsc{Greedy} (Algorithm~\ref{alg:greedy}) to find a set $A_0$ whose value $F(A_0) - F(\emptyset)$ is at least $1-1/e$ times the one of an optimal solution for election control problem~\cite{NWF78}.
Note that, we can use the same algorithm to approximate $F(A_0)$ within the same approximation bound.

\begin{lemma}\label{lem:submodularity}
Given a graph $G'\in \mathcal{G}$ and a positive integer $r\leq m$ , the size of $R_{G'}(A_0, V^r_{\cstar})$ in $G'$ is a monotone submodular function of the initial set of nodes $A_0$.
\end{lemma}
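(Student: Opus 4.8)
The plan is to reduce the claim to the textbook fact that a \emph{coverage} function is monotone and submodular. The crucial observation is that the live-edge graph $G'$ is \emph{fixed}, i.e.\ deterministic: in a fixed directed graph a node $v$ is reachable from a seed set $A_0$ if and only if there is a directed path to $v$ originating at some single seed $u \in A_0$. Consequently reachability distributes over unions of sources,
\[
R_{G'}(A_0) = \bigcup_{u \in A_0} R_{G'}(\{u\}),
\]
and intersecting with the fixed vertex set $V^r_{\cstar}$ yields
\[
R_{G'}(A_0, V^r_{\cstar}) = \bigcup_{u \in A_0} S_u, \qquad \text{where } S_u := R_{G'}(\{u\}, V^r_{\cstar}).
\]
Each $S_u \subseteq V$ depends only on $u$ (together with the now-fixed $G'$ and $r$), so the quantity in the lemma is exactly the coverage function $A_0 \mapsto \bigl|\bigcup_{u \in A_0} S_u\bigr|$.

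First I would establish the two displayed identities: the first from the path characterization of reachability, the second by distributing the intersection with $V^r_{\cstar}$ over the union. Monotonicity is then immediate, since $A_0 \subseteq A_0'$ gives $\bigcup_{u\in A_0} S_u \subseteq \bigcup_{u \in A_0'} S_u$ and hence the sizes are ordered. For submodularity I would argue directly from the definition: for $S \subseteq T$ and $e \notin T$, the marginal gain of adding $e$ equals the number of newly covered elements,
\[
\Card{R_{G'}(S \cup \{e\}, V^r_{\cstar})} - \Card{R_{G'}(S, V^r_{\cstar})} = \Card{S_e \setminus \textstyle\bigcup_{u \in S} S_u},
\]
and likewise for $T$. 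Since $\bigcup_{u \in S} S_u \subseteq \bigcup_{u \in T} S_u$, we have $S_e \setminus \bigcup_{u \in T} S_u \subseteq S_e \setminus \bigcup_{u \in S} S_u$, so the marginal gain at $S$ dominates that at $T$, which is precisely the submodularity inequality.

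I do not expect a serious obstacle: the heavy lifting was already done when the \TCT/live-edge machinery converted the stochastic threshold dynamics into reachability on a fixed graph. The one point I would state explicitly is \emph{why} reachability decomposes as a union over individual seeds. This identity fails for the random diffusion process, where the edges that transmit influence depend on the whole realization; it holds here only because $G'$ has been fixed, so every path to a reachable node starts at one source. Once that decomposition is in place, the monotonicity and submodularity of $\bigl|\bigcup_{u\in A_0} S_u\bigr|$ follow from the standard coverage argument above, and the stated lemma is proved.
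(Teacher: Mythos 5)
Your proposal is correct and follows essentially the same route as the paper: the paper's proof also rests on the identity $R_{G'}(B\cup C, V^r_{\cstar}) = R_{G'}(B, V^r_{\cstar}) \cup R_{G'}(C, V^r_{\cstar})$ (your per-seed union decomposition, applied to $S$ and $\{v\}$) and then computes the marginal gain as $\card{R_{G'}(\{v\}, V^r_{\cstar}) \setminus R_{G'}(S, V^r_{\cstar})}$, exactly your $\card{S_e \setminus \bigcup_{u\in S} S_u}$. Casting the function explicitly as a coverage function is a cosmetic repackaging of the same argument, and your remark on why the decomposition holds only for the fixed graph $G'$ is a sound and worthwhile clarification.
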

\begin{proof}
Given $A_0\subseteq V$, for any $v\in V\setminus A_0$, the nodes in $V^r _{\cstar}$ that are reachable from $A_0$ in $G'$ are reachable also from $A_0\cup \{v\}$. Therefore, 
\[
 |R_{G'}(A_0\cup \{v\}, V^r _{\cstar})|\geq |R_{G'}(A_0, V^r _{\cstar})|.
\]

Let us consider two sets of nodes $S,T$ such that $S\subseteq T \subseteq V$ and a node $v\in V\setminus T$. We show that 
$|R_{G'}(S\cup \{v\}, V^r _{\cstar})| - |R_{G'}(S, V^r _{\cstar})| \geq 
|R_{G'}(T\cup \{v\}, V^r _{\cstar})| - |R_{G'}(T, V^r _{\cstar})|$.
Since $v\in S\cup\{v\}$, we have that
\[
|R_{G'}(S\cup \{v\}, V^r _{\cstar})| - |R_{G'}(S, V^r _{\cstar})| = 
|R_{G'}(S\cup \{v\}, V^r _{\cstar}) \setminus R_{G'}(S, V^r _{\cstar})|.
\]
Moreover, for any two sets of nodes $B,C$ we have that $R_{G'}(B\cup C, V^r _{\cstar}) = R_{G'}(B, V^r _{\cstar}) \cup R_{G'}(C, V^r _{\cstar})$.
Hence
\begin{align*}
 R_{G'}(S\cup \{v\}, V^r _{\cstar})\setminus R_{G'}(S, V^r _{\cstar})
 &= [ R_{G'}(S, V^r _{\cstar})\cup R_{G'}(\{v\}, V^r _{\cstar}) ]\setminus R_{G'}(S, V^r _{\cstar})
 \\
 &= R_{G'}(\{v\}, V^r _{\cstar})\setminus R_{G'}(S, V^r _{\cstar}).
\end{align*}
Similarly,
\[
|R_{G'}(T\cup \{v\}, V^r _{\cstar})| - |R_{G'}(T, V^r _{\cstar})| = |R_{G'}(\{v\}, V^r _{\cstar})\setminus R_{G'}(T, V^r _{\cstar})|.
\]
Since $S\subseteq T$, then $R_{G'}(S, V^r _{\cstar})\subseteq R_{G'}(T, V^r _{\cstar})$ and then $R_{G'}(\{v\}, V^r _{\cstar}) \setminus R_{G'}(S, V^r _{\cstar}) \supseteq R_{G'}(\{v\}, V^r _{\cstar})  \setminus R_{G'}(T, V^r _{\cstar})$, which implies the statement.
\end{proof}

\section{Maximizing the Score: Scoring Rule}
\label{sec:scoring}
In this section we extend the results of Section~\ref{sec:plurality} to the general case of the \emph{scoring rule}, in which a \emph{scoring function} $f$ assigns a score to each candidate according to the positions he was ranked in the voters' lists. 
The overall approach is similar, but more general: 
We first define an alternative random process to \LTM and show its equivalence to \LTM; 
then we use this model to compute $F(A_0)$ and show that it is a monotone submodular function of the initial set of active nodes $A_0$. 
This latter result allows us to compute a set $A_0$ that has an approximation guarantee of $1-1/e$ on the maximization of the score of the target candidate.

The alternative random process, called \emph{Live-edge Dice Roll} (\LDR), is defined as follows.
\begin{definition}\label{def:ldr}
(\emph{Live-edge Dice Roll process})
\begin{enumerate}
    \item Each node $v \in V$ selects at most one of its incoming edges with probability proportional to the weight of that edge, i.e., edge $(u,v)$ is selected with probability $b_{uv}$, and no edge is selected with probability $1 - \sum_{u \in N_v} b_{uv}$.
    \item Each node $v$ with $\pi_v(\cstar)>1$ that is reachable from $A_0$ in the live-edge graph rolls a biased $\pi_v(\cstar)$-sided dice and changes its list according to the outcome. 
    This is equivalent to picking a random real number $s_v$ in $[0,1]$ and setting the position of $\cstar$ according to $s_v$ as follows:
    \[
    \tilde{\pi}_v(\cstar) = \left\{
    \begin{array}{ll}
         1    & \text{if } s_v \leq \frac{\alpha(\pi_v(\cstar))}{\pi_v(\cstar)-1},\\
         \ell & \text{if }  \frac{\alpha(\pi_v(\cstar))}{\pi_v(\cstar)-\ell+1} < s_v \leq \frac{\alpha(\pi_v(\cstar))}{\pi_v(\cstar)-\ell},\\
         &\quad \text{for }\ell = 2,\ldots, \pi_v(\cstar)-1,\\
         \pi_v(\cstar)  & \text{if }  s_v > \alpha(\pi_v(\cstar)).
    \end{array}
    \right.
    \]
    If $\tilde{\pi}_v(\cstar)\neq \pi_v(\cstar)$, all candidates between $\tilde{\pi}_v(\cstar)$ and $\pi_v(\cstar)-1$ are shifted down by one position.
\end{enumerate}
\end{definition}

In the next theorem we show that processes \LTM and \LDR have the same distribution.
\begin{theorem}\label{th:scoring}
Given a set of initially active nodes $A_0$ and a node $v\in V$, let $\tilde{\pi}_v^{\LTM}(\cstar)$ and $\tilde{\pi}_v^{\LDR}(\cstar)$ be the position of node $v$ at the end of \LTM and \LDR, respectively, both starting from $A_0$. Then, $\Pr{\tilde{\pi}_v^{\LTM}(\cstar)=\ell} = \Pr{\tilde{\pi}_v^{\LDR}(\cstar)=\ell}$, for each $\ell=1,\ldots, \pi_v(\cstar)$.
\end{theorem}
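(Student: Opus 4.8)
The plan is to mirror the structure of the proof of Theorem~\ref{th:plurality}, but now tracking the full distribution over final positions $\ell\in\{1,\ldots,\pi_v(\cstar)\}$ rather than only the event $\tilde{\pi}_v(\cstar)=1$. Throughout I fix a node $v$ with $\pi_v(\cstar)>1$ (nodes with $\pi_v(\cstar)=1$ never move in either process), abbreviate $p:=\pi_v(\cstar)$, and condition on the set $U=A\cap N_v$ of active neighbours at the end of \LTM, writing $s:=\sum_{u\in U}b_{uv}$.

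First I would analyse \LTM. Since $\tilde{\pi}_v(\cstar)=\ell$ holds exactly when $\pi^\uparrow_v(\cstar)=p-\ell$, I invert the definition $\pi^\uparrow_v(\cstar)=\min\bigl(p-1,\Floor{\tfrac{\alpha(p)}{t_v}s}\bigr)$ to read off the band of thresholds producing each $\ell$: for $\ell=1$ the minimum saturates, giving $t_v\le\frac{\alpha(p)}{p-1}s$; for $2\le\ell\le p-1$ the floor equals $p-\ell$, giving $\frac{\alpha(p)s}{p-\ell+1}<t_v\le\frac{\alpha(p)s}{p-\ell}$; and $\ell=p$ corresponds to $t_v>\alpha(p)s$. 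Because $t_v$ is uniform on $[0,1]$ and the edge-weight constraint forces $s\le 1$ (so every right endpoint is at most $1$ and the expressions are genuine probabilities), I obtain for each $\ell<p$
\[
\Pr{\tilde{\pi}_v^{\LTM}(\cstar)=\ell}{(A\cap N_v)=U}=\beta_\ell\cdot s,
\]
with $\beta_1=\frac{\alpha(p)}{p-1}$ and $\beta_\ell=\alpha(p)\bigl(\frac{1}{p-\ell}-\frac{1}{p-\ell+1}\bigr)$ for $\ell\ge 2$. Averaging over $U$ via the law of total probability then yields $\Pr{\tilde{\pi}_v^{\LTM}(\cstar)=\ell}=\beta_\ell\sum_{U\subseteq N_v}\bigl(\sum_{u\in U}b_{uv}\bigr)\Pr{(A\cap N_v)=U}$.

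Next I turn to \LDR. Here the events ``$v$ is reachable from $A_0$ in the live-edge graph'' and ``the roll $s_v$ lands in a prescribed interval'' are independent, so for $\ell<p$ I factor $\Pr{\tilde{\pi}_v^{\LDR}(\cstar)=\ell}=\Pr{v\in R(A_0)}\cdot\Pr{s_v\in I_\ell}$, where $I_\ell$ is the interval assigned to position $\ell$ in Definition~\ref{def:ldr}. A direct check shows $\Pr{s_v\in I_\ell}=\beta_\ell$, exactly the constant from the \LTM computation. Applying Lemma~\ref{lem:reachability} to rewrite $\Pr{v\in R(A_0)}=\sum_{U\subseteq N_v}\bigl(\sum_{u\in U}b_{uv}\bigr)\Pr{(R(A_0)\cap N_v)=U}$, and then invoking Theorem~\ref{teo:kempe_equivalence} to equate $\Pr{(R(A_0)\cap N_v)=U}=\Pr{(A\cap N_v)=U}$, produces the identical expression to the one obtained for \LTM. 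This settles every $\ell\in\{1,\ldots,p-1\}$; the remaining case $\ell=p$ then follows for free, since the positions partition the probability mass of each process into one, so agreement on $p-1$ of them forces agreement on the last.

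The bulk of the work, and the only place where care is needed, is the floor/min inversion in the \LTM analysis: I must verify that the half-open threshold bands I extract are precisely the preimages, under the map $s_v\mapsto\tilde{\pi}_v(\cstar)$, of the dice intervals hard-wired into Definition~\ref{def:ldr}, and in particular that the saturating case $\ell=1$ and the boundary case $\ell=p$ are matched correctly (this is also where the bound $s\le 1$ is essential). Everything downstream---the independence factorisation in \LDR, Lemma~\ref{lem:reachability}, and Theorem~\ref{teo:kempe_equivalence}---is a verbatim reprise of the plurality argument, so once the constants $\beta_\ell$ are shown to coincide the theorem closes at once.
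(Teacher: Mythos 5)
Your proposal follows essentially the same route as the paper: condition on the active neighbourhood $U$, invert the floor/min in the definition of $\pi^\uparrow_v(\cstar)$ to read off the threshold bands for each final position, recognise the per-position constants (the paper's $\Pr(r,\ell)$, your $\beta_\ell$), and transfer to \LDR via the reachability factorisation, Lemma~\ref{lem:reachability}, and Theorem~\ref{teo:kempe_equivalence}. Your one deviation---obtaining the case $\ell=\pi_v(\cstar)$ by complementation rather than directly---is in fact slightly cleaner, since for that case the conditional probability is $1-\alpha(r)\sum_{u\in U}b_{uv}$ rather than $(1-\alpha(r))\sum_{u\in U}b_{uv}$, so the uniform factorisation $\Pr(r,\ell)\sum_{u\in U}b_{uv}$ that the paper asserts for all $\ell$ does not literally hold at $\ell=r$ (and likewise the \LDR identity $\Pr{\tilde{\pi}_v^{\LDR}(\cstar)=\ell}=\Pr{v\in R(A_0)}\Pr(\pi_v(\cstar),\ell)$ fails there, because unreachable nodes also retain position $r$).
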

\begin{proof}
Let $A$ be the set of active nodes at the end of the \LTM process that starts from $A_0$.
The probability that an active node moves candidate $\cstar$ to position $\ell$ is given by the following function.
\begin{definition}\label{def:P(r,l)}
For each $r,\ell\in \{1,\ldots,m\}$ we define:
\[
    \Pr{r,\ell} =\left\{\begin{array}{ll}
         \frac{\alpha(r)}{r-1} & \mbox{if } \ell=1,\\
         \frac{\alpha(r)}{r-\ell} - \frac{\alpha(r)}{r-\ell+1} & \mbox{if } \ell=2,\ldots, r-1,\\
         1- \alpha(r) & \mbox{if }  \ell=r.
    \end{array}
    \right.
\]
\end{definition}
In particular, for a node $v$, the probability that the second step of \LDR yields $\tilde{\pi}_v(\cstar) = \ell$, for $\ell=1,\ldots, \pi_v(\cstar)$, is $\Pr(\pi_v(\cstar),\ell)$.
Thus,
\[
    \Pr{\tilde{\pi}_v^{\LTM}(\cstar)=\ell}
    = \sum_{U\subseteq N_v} \Pr{\tilde{\pi}_v^{\LTM}(\cstar)=\ell}{(A\cap N_v) = U}\cdot\Pr{(A\cap N_v) = U}.
\]
If $U$ is is the maximal subset of active neighbors of $v$ (i.e., $U =A\cap N_v$), then we can write the probability that $\tilde{\pi}_v^{\LTM}(\cstar)=\ell$ given $U$ as follows:
\[
\Pr{\tilde{\pi}_v^{\LTM}(\cstar)=\ell}{(A\cap N_v) = U} 
=\Pr{t_v \leq \frac{\alpha(\pi_v(\cstar))}{\pi_v(\cstar) - 1} \sum_{u \in U} b_{uv}}
\]
if $\ell=1$;
\begin{align*}
&\Pr{\tilde{\pi}_v^{\LTM}(\cstar)=\ell}{(A\cap N_v) = U}
\\
&= \Pr{\frac{\alpha(\pi_v(\cstar))}{\pi_v(\cstar) - \ell + 1} \sum_{u \in U} b_{uv}<t_v \leq \frac{\alpha(\pi_v(\cstar))}{\pi_v(\cstar) - \ell} \sum_{u \in U} b_{uv}}
\end{align*}
if $\ell=2,\ldots, \pi_v(\cstar)-1$;
\[
\Pr{\tilde{\pi}_v^{\LTM}(\cstar)=\ell}{(A\cap N_v) = U} = \Pr{t_v > \alpha(\pi_v(\cstar)) \sum_{u \in U} b_{uv}}
\]
if $\ell=\pi_v(\cstar)$. In other words,
\[
\Pr{\tilde{\pi}_v^{\LTM}(\cstar)=\ell}{(A\cap N_v) = U} = \Pr(r,\ell)\sum_{u \in U} b_{uv}.
\]
 Therefore,
\[
\Pr{\tilde{\pi}_v^{\LTM}(\cstar)=\ell} = \Pr(\pi_v(\cstar),\ell)\sum_{U\subseteq N_v}\sum_{u \in U} b_{uv} \Pr{(A\cap N_v) = U}.
\]

In \LDR, $\Pr{\tilde{\pi}_v^{\LDR}(\cstar)=\ell} =\Pr(v\in R(A_0))\cdot \Pr(\pi_v(\cstar),\ell)$. By Lemma~\ref{lem:reachability}, it follows that 
\[
    \Pr{\tilde{\pi}_v^{\LDR}(\cstar)=\ell}
    = \Pr(\pi_v(\cstar),\ell)\sum_{U\subseteq N_v}\sum_{u \in U} b_{uv} \Pr{(R(A_0)\cap N_v) = U}.
\]
By Theorem~\ref{teo:kempe_equivalence}, $\Pr{(R(A_0)\cap N_v) = U} = \Pr{(A\cap N_v) = U}$, which shows the statement.
\end{proof}
By definition, the value of $F(A_0)$ is
\[
F(A_0) = \Ex{\sum_{v\in V}f(\tilde{\pi}_v(\cstar))}= 
\sum_{v\in V}\sum_{\ell = 1}^{\pi_v(\cstar)} f(\ell)\Pr{\tilde{\pi}_v(\cstar) = \ell}.
\]
In \LDR, $\Pr{\tilde{\pi}_v^{\LDR}(\cstar)=\ell} =\Pr(v\in R(A_0))\cdot \Pr(\pi_v(\cstar),\ell)$, moreover, by Lemma~\ref{lem:reachability}, $\Pr(v\in R(A_0)) =\sum_{G' \in \mathcal{G}} \Pr{G'} \bmone_{(G',v)}$. Then,
\[
F(A_0) =\sum_{v\in V}\sum_{\ell = 1}^{\pi_v(\cstar)} f(\ell)\Pr(\pi_v(\cstar),\ell)\sum_{G' \in \mathcal{G}} \Pr{G'} \bmone_{(G',v)},
\]
which can be rewritten as
\begin{align*}
F(A_0) 
&= \sum_{r=1}^{m} \sum_{\ell=1}^r f(\ell)\Pr(\pi_v(\cstar),\ell)\sum_{G' \in \mathcal{G}} \Pr{G'} \sum_{v:\pi_v(\cstar)=r}\bmone_{(G',v)}
\\
&= \sum_{r=1}^{m} \sum_{\ell=1}^r f(\ell)\Pr(\pi_v(\cstar),\ell)\sum_{G' \in \mathcal{G}} \Pr{G'} |\{v_v\in R(G') \wedge \pi_v(\cstar)=r\}|
\\
&= \sum_{r=1}^{m} \sum_{\ell=1}^r f(\ell)\Pr(\pi_v(\cstar),\ell)\sum_{G' \in \mathcal{G}} \Pr{G'} |R_{G'}(A_0, V^r _{\cstar})|.
\end{align*}
Thus, $F(A_0)$ is a non-negative linear combination of the monotone submodular function $|R_{G'}(A_0, V^r _{\cstar})|$ (see Lemma~\ref{lem:submodularity}), and hence $F(A_0) - F(\emptyset)$ is also monotone and submodular. 
Thus, we can use \textsc{Greedy} (Algorithm~\ref{alg:greedy}) to find a $(1-1/e)$-approximation to the problem of maximizing the score of the target candidate~\cite{NWF78}.

\section{Approximating Margin of Victory}
\label{sec:approx}
We have seen in previous sections that we can map the problem of maximizing the score of the target candidate to that of influence maximization both in the \emph{plurality} (Section~\ref{sec:plurality}) and in the \emph{scoring} rules (Section~\ref{sec:scoring});
we also defined two alternative processes (Definitions~\ref{def:lcf} and~\ref{def:ldr}) and showed their equivalence to \LTM for both rules (Theorems~\ref{th:plurality} and~\ref{th:scoring}). 
By showing that the objective function is monotone and submodular w.r.t.\ the initial set of seed nodes (Lemma~\ref{lem:submodularity}) it follows that \textsc{Greedy} (Algorithm~\ref{alg:greedy}) finds a $(1-1/e)$-approximation of the optimum~\cite{NWF78}.

In the following we show how to achieve a constant factor approximation to the original problem of maximizing the \MOV by only maximizing the score of the target candidate.
Given the equivalence of the processes with \LTM, we can formulate our original objective function as the average $\MOV_{G'}$ computed on a sampled live-edge graph $G'$, namely $\Ex{\MOV(A_0)} = \Ex{\MOV_{G'}(A_0)}$, where
\[
\MOV_{G'}(A_0) = \mu_{G'}(\emptyset) - \mu_{G'}(A_0),
\]
and $\mu_{G'}$ is the change in margin on a fixed $G'$.
We formulate the margin on the live-edge graphs in a way that is akin to that of~\cite{wilder2018controlling}:
We can exploit such formulation to prove our constant factor approximation with the same proof structure since also in our case the objective function is monotone and submodular (Lemma~\ref{lem:submodularity}).
In particular in the simple case of the \emph{plurality} rule we have that
\begin{align*}
\Ex{\MOV_{G'}(A_0)} 
&:= \sum_{r=2}^{m} \frac{\alpha(r)}{r - 1} |R_{G'}(A_0, V^r _{\cstar})|
\\
&\quad + \min_{c_z}\left(
\max_{c_i}|V^1_{c_i}|
- |V_{c_z} ^1|
+ \sum_{r=2}^{m} \frac{\alpha(r)}{r - 1} |R_{G'}(A_0, V^r _{\cstar} \cap V^1 _{c_z})|
\right),
\end{align*}
where: 
the first term is the number of points gained by the target candidate after \LTM;
the second term (the first inside the minimum) is the number of points of the most voted opponent before \LTM;
the third is the total number of points that the most voted opponent after \LTM had before the process;
the fourth term is the number of points that the most voted opponent after \LTM lost because of the shifting of candidate $\cstar$.

Similarly, in the general case of arbitrary \emph{scoring} rules, we have
\begin{align*}
\Ex{\MOV_{G'}(A_0)} 
&:= \sum_{r=2}^{m}\sum_{\ell=1}^{r-1} \Pr(r, \ell) \, |R_{G'}(A_0, V^r _{\cstar})| \, (f(\ell) - f(r))
\\
&\qquad + \min_{c_z} \left(
\max_{c_i}\sum_{r=1}^{m}f(r)|V_{c_i} ^r|
- \sum_{r=1}^{m}f(r)|V_{c_z} ^r|
\right.
\\
&\qquad \left. + \sum_{r=2}^{m}\sum_{\ell=1}^{r-1}\sum_{h=\ell}^{r-1} \Pr(r, \ell) \, |R_{G'}(A_0, V^r _{\cstar}\cap V^h _{c_z})| \, (f(h) - f(h+1))
\right),
\end{align*}
where the meaning of the terms is similar to above.
This latter formulation is just a generalization of the plurality case whenever we choose $f$ such that $f(1) = 1$ and $f(r) = 0$, for each $r\in\{2,\ldots,m\}$.
In this way we would have that the gain in score would be just 1 and that $\frac{\alpha(r)}{r-1} = \Pr{r, 1}$.


In the following we prove that, up to the loss of a constant-factor in the approximation ratio, it suffices to concentrate only on the score of
the target candidate $\cstar$ and not on the margin w.r.t.\ the most voted opponent.
\begin{theorem}\label{th:approx}
\textsc{Greedy} (Algorithm~\ref{alg:greedy}) is a $\frac{1}{3}(1-1/e)$-approximation algorithm for the problem of election control in arbitrary scoring rule voting systems.
\end{theorem}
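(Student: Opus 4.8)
The plan is to avoid attacking $\Ex{\MOV(A_0)}$ directly and instead run \textsc{Greedy} on the target's score $F$, paying a constant factor for ignoring the opponent. Fix a live-edge graph $G'$ and write the averaged margin change as $\Ex{\MOV_{G'}(A_0)} = \Delta_{G'}(A_0) + \Phi_{G'}(A_0)$, where $\Delta_{G'}(A_0) = \sum_{r\ge 2}\sum_{\ell<r}\Pr(r,\ell)\,|R_{G'}(A_0,V^r_{\cstar})|\,(f(\ell)-f(r))$ is the expected score that $\cstar$ gains, and $\Phi_{G'}(A_0)=\min_{c_z}[Q_{c_z}+S_{c_z}(A_0)]$ is the whole minimum term, with $Q_{c_z}$ the constant head (the pre-process best-opponent score minus that of $c_z$) and $S_{c_z}(A_0)$ the expected points $c_z$ loses by being overtaken. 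Averaging over $G'$ gives $\Ex{\MOV(A_0)} = (F(A_0)-F(\emptyset)) + \Ex{\Phi_{G'}(A_0)}$, and by Section~\ref{sec:scoring} the summand $F(A_0)-F(\emptyset)$ is monotone submodular and vanishes at $\emptyset$.

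First I would prove the lower bound. Since $Q_{c_z}\ge 0$ and $S_{c_z}(A_0)\ge 0$, each $\Phi_{G'}(A_0)$ is nonnegative, so $\Ex{\MOV(\widehat A)} \ge F(\widehat A)-F(\emptyset)$ for the greedy output $\widehat A$. Because $F-F(\emptyset)$ is monotone submodular, the classical greedy guarantee yields $F(\widehat A)-F(\emptyset) \ge (1-1/e)\,(F(A)-F(\emptyset))$ for every feasible $A$; applying this to the \MOV-optimal set $A^\star$ gives $\Ex{\MOV(\widehat A)} \ge (1-1/e)\,(F(A^\star)-F(\emptyset))$.

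The crux is the matching upper bound: I would show the pointwise inequality $\Ex{\MOV_{G'}(A_0)} \le 3\,\Delta_{G'}(A_0)$, for which it suffices to bound the opponent term $\Phi_{G'}$ by a multiple of $\Delta_{G'}$. Here the minimum over $c_z$ helps rather than hinders: evaluating it at the \emph{pre-process} top opponent $c^0$, for which $Q_{c^0}=0$, gives $\Phi_{G'}(A_0)\le S_{c^0}(A_0)$, reducing the task to comparing a single opponent's loss to the target's gain. Fixing $(r,\ell)$, the slices $V^r_{\cstar}\cap V^h_{c^0}$ for $h=\ell,\dots,r-1$ are disjoint subsets of $V^r_{\cstar}$, the increments $f(h)-f(h+1)$ are nonnegative since $f$ is nonincreasing, and they telescope to $f(\ell)-f(r)$; hence $\sum_{h}|R_{G'}(A_0,V^r_{\cstar}\cap V^h_{c^0})|\,(f(h)-f(h+1)) \le |R_{G'}(A_0,V^r_{\cstar})|\,(f(\ell)-f(r))$, i.e.\ $S_{c^0}(A_0)\le \Delta_{G'}(A_0)$ term by term. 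This in fact yields the stronger $\Ex{\MOV_{G'}(A_0)} = \Delta_{G'}(A_0)+\Phi_{G'}(A_0)\le 2\,\Delta_{G'}(A_0)$, which a fortiori gives the claimed $3\,\Delta_{G'}(A_0)$; averaging over $G'$ produces $\Ex{\MOV(A^\star)} \le 3\,(F(A^\star)-F(\emptyset))$.

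Chaining the two bounds gives $\Ex{\MOV(\widehat A)} \ge (1-1/e)\,(F(A^\star)-F(\emptyset)) \ge \tfrac{1}{3}(1-1/e)\,\Ex{\MOV(A^\star)}$, which is the statement. I expect the genuine obstacle to be the term-by-term comparison $S_{c^0}\le \Delta_{G'}$: isolating it cleanly relies on the disjointness of the positional slices $V^h_{c^0}$ and on the telescoping of the score increments, and on choosing $c^0$ so that the constant head drops out. The deeper reason the argument must route through $F$ is that $\Ex{\MOV}$ is a minimum of monotone submodular functions and hence not itself submodular, so a direct greedy on \MOV carries no guarantee, whereas the submodular surrogate $F$ does.
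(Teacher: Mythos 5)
Your proposal is correct and follows the same overall strategy as the paper --- run \textsc{Greedy} on the submodular surrogate $F$, lower-bound $\Ex{\MOV(\widehat A)}$ by the score gain after discarding the nonnegative opponent term, and upper-bound $\Ex{\MOV(A^\star)}$ by a constant multiple of the score gain --- but the execution of the second half is genuinely different and, in my view, cleaner. The paper keeps the two minimizers $\bar{c}$ and $\hat{c}$ of the opponent term for $A_0$ and $A_0^\star$, inserts \emph{both} of their loss terms inside the bracket, extracts a $\frac{1}{3}$ by arguing that each of the three summands is dominated by the score-gain term, and then closes the argument with a final comparison between $|V^1_{\hat{c}}|$ and $|V^1_{\bar{c}}|$ coming from the definition of $\hat{c}$. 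You instead evaluate the minimum at the pre-process top opponent $c^0$, which kills the constant head $Q_{c^0}$ entirely, and then bound that single opponent's loss $S_{c^0}(A_0)$ by the target's gain $\Delta_{G'}(A_0)$ term by term, using that the slices $V^r_{\cstar}\cap V^h_{c^0}$ sit inside $V^r_{\cstar}$ and that the increments $f(h)-f(h+1)$ telescope to $f(\ell)-f(r)$; this gives the pointwise bound $\Ex{\MOV_{G'}(A_0)}\le 2\,\Delta_{G'}(A_0)$ and hence the stronger ratio $\frac{1}{2}(1-1/e)$, which a fortiori proves the stated $\frac{1}{3}(1-1/e)$ and matches the constant the paper only obtains in the destructive case. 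Your route avoids the paper's most opaque step (the insertion of the $\hat{c}$ term and the three-way comparison) and makes explicit why the argument must pass through $F$ at all, namely that $\Ex{\MOV}$ is a minimum of monotone submodular functions and so is not itself submodular. One caveat you share with the paper rather than introduce: both arguments take the displayed formula for $\Ex{\MOV_{G'}(A_0)}$, with the minimum applied to already-averaged scores, as the definition of the objective, so neither addresses the gap between a minimum of expectations and the expectation of a minimum.
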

\begin{proof}
Let $A_0$ be the solution found by \textsc{Greedy} (Algorithm~\ref{alg:greedy}) in the election control problem and let $A_0^\star$ be the optimal solution.
Let $\bar{c}$ and $\hat{c}$ respectively be the candidates that minimize the second term of 
$\Ex{\MOV_{G'}(A_0)}$ and $\Ex{\MOV_{G'}(A_0^\star)}$.
Note that
\begin{align*}
\Ex{\MOV_{G'}(A_0)} 
&= F(A_0) - F(\emptyset) + |V^1_{c}| - |V^1 _{\bar{c}}|
\\
&\qquad + \sum_{r=2}^{m}\sum_{\ell=1}^{r-1}\sum_{h=\ell}^{r-1} \Pr(r, \ell) \, |R_{G'}(A_0, V^r _{\cstar}\cap V^h _{\bar{c}})| \, (f(h) - f(h+1)),
\end{align*}
where $c$ is the most voted candidate before the process.
Since $F(A_0)-F(\emptyset) = \sum_{r=2}^{m}\sum_{\ell=1}^{r-1} \Pr(r, \ell) \, |R_{G'}(A_0, V^r _{\cstar})| \, (f(\ell) - f(r))$
and $F(A_0) - F(\emptyset) \geq (1-1/e)(F(A_0^\star) - F(\emptyset))$, we get
\begin{align*}
&\Ex{\MOV_{G'}(A_0)} 
\geq F(A_0) - F(\emptyset) + |V^1_{c}| - |V^1 _{\bar{c}}|
\\
&\geq (1-1/e) \left[ \sum_{r=2}^{m}\sum_{\ell=1}^{r-1} \Pr(r, \ell) \, |R_{G'}(A_0^\star, V^r _{\cstar})| \, (f(\ell) - f(r))
\right] + |V^1_{c}| - |V^1 _{\bar{c}}|
\\
&\geq \frac{1}{3} (1-1/e) \left[
\sum_{r=2}^{m}\sum_{\ell=1}^{r-1} \Pr(r, \ell) \, |R_{G'}(A_0^\star, V^r _{\cstar})| \, (f(\ell) - f(r))
\right.
\\
&\qquad + \sum_{r=2}^{m}\sum_{\ell=1}^{r-1}\sum_{h=\ell}^{r-1} \Pr(r, \ell) \, |R_{G'}(A_0^\star, V^r _{\cstar}\cap V^h _{\bar{c}})| \, (f(h) - f(h+1))
\\
&\qquad \left. + \sum_{r=2}^{m}\sum_{\ell=1}^{r-1}\sum_{h=\ell}^{r-1} \Pr(r, \ell) \, |R_{G'}(A_0^\star, V^r _{\cstar}\cap V^h _{\hat{c}})| \, (f(h) - f(h+1)) + |V^1_{c}| - |V^1 _{\bar{c}}|
\right].
\end{align*}
Note that this is possible thanks to Theorem~\ref{th:scoring} and because the last two terms in the last inequality are smaller than the first term for any solution $A_0$ and candidate $c_i$ since the solution $A_0$ can only increase the score of $\cstar$. 
Therefore, for any other candidate $c_i$ the score can only decrease.
With some additional algebra we get that
\begin{align*}
    \Ex{\MOV_{G'}(A_0)} 
    &\geq \frac{1}{3} (1-1/e) \MOV_{G'}(A_0^*) + |V^1 _{\hat{c}}| - |V^1 _{\bar{c}}|
    \\
    &\qquad + \sum_{r=2}^{m}\sum_{\ell=1}^{r-1}\sum_{h=\ell}^{r-1} \Pr(r, \ell) \, |R_{G'}(A_0^\star, V^r _{\cstar}\cap V^h _{\bar{c}})| \, (f(h) - f(h+1)).
\end{align*}
By definition of $\hat{c}$ we have that 
\begin{align*}
|V^1 _{\hat{c}}| - |V^1 _{\bar{c}}| 
&\geq
\sum_{r=2}^{m}\sum_{\ell=1}^{r-1}\sum_{h=\ell}^{r-1} \Pr(r, \ell) \, |R_{G'}(A_0^\star, V^r _{\cstar}\cap V^h _{\hat{c}})| \, (f(h) - f(h+1)) 
\\
&\qquad- \sum_{r=2}^{m}\sum_{\ell=1}^{r-1}\sum_{h=\ell}^{r-1} \Pr(r, \ell) \, |R_{G'}(A_0^\star, V^r _{\cstar}\cap V^h _{\bar{c}})| \, (f(h) - f(h+1))
\end{align*}
and therefore
\[
\Ex{\MOV_{G'}(A_0)} \geq \frac{1}{3} (1-1/e) \MOV_{G'}(A_0^*).
\]
\end{proof}

\section{Destructive Election Control}
\label{sec:destructive}
In this section we focus on the \emph{destructive election control} problem.
The model is similar to the \emph{constructive} one (see Section~\ref{sec:problem}): Here we define, for each node $v \in V$, the number of positions of which $\cstar$ shifts down after the \LTM process as
\[
\pi^\downarrow_v(\cstar) := \min \left( m - \pi_v(\cstar), \, \Floor{ \frac{\alpha(\pi_v(\cstar))}{t_v} \sum_{\substack{u \in A,\,(u,v) \in E}} b_{uv} } \right).
\]
The final position of $\cstar$ in $v$ will be $\tilde{\pi}_v(\cstar) := \pi_v(\cstar) + \pi^\downarrow_v(\cstar)$ and the overall score that $\cstar$ gets is 
\[
F_D(A_0) := \Ex{\sum_{v\in V}f(\pi_v(\cstar) + \pi^\downarrow_v(\cstar) )}.
\]

Formally, the problem can be defined as that of finding an initial set of seed nodes $A_0$ such that
\[
\begin{array}{rl}
    \max_{A_0} & \Ex{\MOV_D(A_0)} := \Ex{\mu(A_0) - \mu(\emptyset)}
    \\
    \text{s.t.} & |A_0| \leq B,
\end{array}
\]
namely to find an initial set of seed nodes of at most size $B$ that maximizes the expected $\MOV_D$, i.e., minimizes the expected \MOV.


Similarly to the constructive case, we aim at decreasing the overall score of a target candidate $\cstar$ as much as possible since, as before, in this way we can achieve a constant factor approximation.
To do that we provide a reduction from the destructive to the constructive case. 
Given an instance of destructive control, we build an instance of constructive control in which we simply reverse the rankings of each node and complement the scoring function to its maximum value.
Roughly speaking, this reduction maintains invariant the absolute value of the change in margin of the score of any candidate between the two cases.
Formally, for each $v\in V$, the new instance has a preference list defined as $\pi'_v(c) := m-\pi_v(c)+1$ for each candidate $c \in C$, and, for each position $r\in\{1,\ldots,m\}$, has a scoring function defined as $f'(r) := f_{\max} - f(m-r+1)$, where $f_{\max} := \max_{r \in \{1,\ldots,m\}}{f(r)}$.
For each $v\in V$, the ranking of $\cstar$ in the new instance is $\pi'_v(\cstar) := m-\pi_v(\cstar)+1$.

For each solution $A_0$ found in the new instance, i.e., a constructive one, the overall score of $\cstar$ after the process is 
\[
    F'(A_0) := \Ex{\sum_{v\in V}f'(\pi'_v(\cstar) - \pi'^\uparrow_v(\cstar))},
\] 
where $\pi'^\uparrow_v(\cstar) := \min \left( \pi'_v(\cstar) -1, \, \Floor{ \frac{\alpha(\pi_v(\cstar))}{t_v} \sum_{\substack{u \in A,\,(u,v) \in E}} b_{uv} } \right)$.
Let $F_D(\emptyset) = F(\emptyset)$ and $F'(\emptyset) := \sum_{v\in V}f'(\pi'_v(\cstar))$. 
Then the following lemma holds.
\begin{lemma}\label{lem:reduction}
\(
F_D(\emptyset) -F_D(A_0) =  F'(A_0) -F'(\emptyset),
\)
for every $A_0$.
\end{lemma}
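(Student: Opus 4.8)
We need to show that the "destructive change in score" in the original instance equals the "constructive change in score" in the reduced instance. The key insight is that the reduction reverses the rankings ($\pi'_v(c) = m - \pi_v(c) + 1$) and complements the scoring function ($f'(r) = f_{\max} - f(m-r+1)$), so that pushing $\cstar$ \emph{down} in the original is mirrored by pushing it \emph{up} in the reduced instance by the same number of positions.

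\textbf{Overall strategy.} The plan is to reduce the equality of two expectations to a single \emph{pointwise} identity that holds for every node and every realization of the randomness, and then sum and use linearity of expectation. Concretely, I would show that for each $v\in V$ and each outcome of the thresholds $\{t_w\}_{w\in V}$ (which, together with $A_0$ and the edge weights, fix the final active set $A$),
\[
f(\pi_v(\cstar)) - f\!\left(\pi_v(\cstar) + \pi^\downarrow_v(\cstar)\right) = f'\!\left(\pi'_v(\cstar) - \pi'^\uparrow_v(\cstar)\right) - f'\!\left(\pi'_v(\cstar)\right).
\]
Summing over $v\in V$ and taking expectations then gives $F_D(\emptyset) - F_D(A_0) = F'(A_0) - F'(\emptyset)$, which is exactly the claim. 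The essential preliminary remark is that the reduction modifies only the preference lists and the scoring function; it leaves the graph $G$, the edge weights $b_{uv}$, and the thresholds $t_v$ untouched. Hence the underlying Linear Threshold Model diffusion, and in particular its final active set $A$, is identical across the two instances, so both expectations are taken over one and the same probability space.

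\textbf{The shift amounts coincide.} The heart of the argument is that $\cstar$ moves by the same number of positions in the two instances. First I would observe that the floor quantity
\[
K_v := \Floor{\frac{\alpha(\pi_v(\cstar))}{t_v}\sum_{u\in A,\,(u,v)\in E} b_{uv}}
\]
is \emph{literally the same expression} in both instances, because the reduction deliberately keeps the argument of $\alpha$ equal to $\pi_v(\cstar)$ in the definition of $\pi'^\uparrow_v(\cstar)$, rather than switching it to $\pi'_v(\cstar)$. Since $\pi'_v(\cstar) = m - \pi_v(\cstar) + 1$, the cap in the constructive instance satisfies $\pi'_v(\cstar) - 1 = m - \pi_v(\cstar)$, which is precisely the cap in the destructive instance. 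Therefore $\pi^\downarrow_v(\cstar) = \min(m - \pi_v(\cstar),\,K_v) = \pi'^\uparrow_v(\cstar)$, and I would denote this common value by $d_v$.

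\textbf{Unwinding the complemented score.} It remains to exploit $f'(r) = f_{\max} - f(m-r+1)$ together with the identity $m - \pi'_v(\cstar) + 1 = \pi_v(\cstar)$. Substituting, I would compute $f'(\pi'_v(\cstar)) = f_{\max} - f(\pi_v(\cstar))$ and, using $\pi'_v(\cstar) - d_v$ in the same way, $f'(\pi'_v(\cstar) - d_v) = f_{\max} - f(\pi_v(\cstar) + d_v)$. Subtracting these two, the $f_{\max}$ terms cancel and the right-hand side of the pointwise identity collapses to $f(\pi_v(\cstar)) - f(\pi_v(\cstar) + d_v)$, which matches the left-hand side verbatim. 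This closes the per-node identity, and summation over $V$ with linearity of expectation finishes the lemma.

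\textbf{Main obstacle.} I expect the only delicate point to be the index bookkeeping around the reversal map $r \mapsto m - r + 1$ and, relatedly, the justification that $\alpha$ must be evaluated at the \emph{original} position $\pi_v(\cstar)$ for the floor terms to agree. Everything else is routine substitution, but this choice is doing real work: had the reduction instead used $\alpha(\pi'_v(\cstar))$, the quantities $K_v$ would generally differ between the two instances, the shifts $d_v$ would no longer coincide, and the pointwise cancellation would fail. I would therefore flag explicitly that the definition of $\pi'^\uparrow_v(\cstar)$ in the reduction, and not merely the complementation of $f$, is what makes the identity go through.
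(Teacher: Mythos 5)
Your proposal is correct and follows essentially the same route as the paper's proof: both rest on the two observations that $\pi'^\uparrow_v(\cstar) = \pi^\downarrow_v(\cstar)$ and $\pi_v(\cstar) = m - \pi'_v(\cstar) + 1$, then substitute into $f'(r) = f_{\max} - f(m-r+1)$ so the $f_{\max}$ terms cancel and linearity of expectation finishes. Your write-up is somewhat more explicit than the paper's (which merely states ``Observe that $\pi'^\uparrow_v(\cstar) = \pi^\downarrow_v(\cstar)$'') in justifying why the caps and the floor quantities coincide, but the argument is the same.
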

\begin{proof}
Observe that 
$\pi'^\uparrow_v(\cstar) = \pi^\downarrow_v(\cstar)$
and that $\pi_v(\cstar)=m-\pi'_v(\cstar)+1$. 
It follows that
\begin{align*}
F'(A_0) - F'(\emptyset) &= 
\Ex{\sum_{v\in V}[ f_{\max} - f(m - (\pi'_v(\cstar) - \pi'^\uparrow_v(\cstar))+1)]}
\\
&\qquad - \Ex{\sum_{v\in V}[ f_{\max} - f(m-\pi'_v(\cstar)+1 )]}
\\
&= \Ex{\sum_{v\in V}[ f(m-\pi'_v(\cstar)+1 ) - f(m - (\pi'_v(\cstar) - \pi'^\uparrow_v(\cstar))+1)]}
\\
&= \Ex{\sum_{v\in V}[ f(m-\pi'_v(\cstar)+1 ) - f(m - \pi'_v(\cstar) + 1 + \pi'^\uparrow_v(\cstar))]}
\\
&= \Ex{\sum_{v\in V}[ f(\pi_v(\cstar) ) - f(\pi_v(\cstar) + \pi_v^\downarrow(\cstar))]}
= F(\emptyset) - F_D(A_0).
\end{align*}
\end{proof}
The reduction, together with Lemma~\ref{lem:reduction}, allows us to maximize the score of the target candidate in the constructive case and then to map it back to destructive case.

\begin{theorem}
\textsc{Greedy} (Algorithm~\ref{alg:greedy}) is a $\frac{1}{2}(1-1/e)$-approximation algorithm for the problem of \emph{destructive} election control in arbitrary scoring rule voting systems.
\end{theorem}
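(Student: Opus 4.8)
The plan is to route the destructive problem through the constructive one via Lemma~\ref{lem:reduction}, and then to sandwich the destructive margin between one and two copies of the change in $\cstar$'s score. First I would invoke the reduction: running \textsc{Greedy} on the transformed instance (reversed rankings $\pi'_v$ and complemented scoring function $f'$) maximizes $F'(A_0)-F'(\emptyset)$, which by Sections~\ref{sec:plurality} and~\ref{sec:scoring} is monotone and submodular, so \textsc{Greedy} returns a set $A_0$ with $F'(A_0)-F'(\emptyset)\geq(1-1/e)\max_{A_0'}(F'(A_0')-F'(\emptyset))$. By Lemma~\ref{lem:reduction} this equals the decrease $F_D(\emptyset)-F_D(A_0)$ in the target's score in the original destructive instance; in particular, if $A_0^\star$ is the optimal solution for $\Ex{\MOV_D}$, then $F_D(\emptyset)-F_D(A_0)\geq(1-1/e)\,(F_D(\emptyset)-F_D(A_0^\star))$.

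The crucial structural fact, responsible for the improved factor $\tfrac12$ over the $\tfrac13$ of Theorem~\ref{th:approx}, is a conservation identity for the downward shift. On any realization (equivalently, on a fixed live-edge graph $G'$ together with the subsequent roll), when $\cstar$ moves from $\pi_v(\cstar)$ down to $\tilde\pi_v(\cstar)$ at a node $v$, exactly the candidates it overtakes each rise by one position, so their combined score gain telescopes to $f(\pi_v(\cstar))-f(\tilde\pi_v(\cstar))$, which is precisely the score $\cstar$ loses at $v$. Summing over nodes and taking expectations, the total score gained by all opponents equals the total score lost by $\cstar$: writing $\sigma$ and $\tilde\sigma$ for the opponents' scores before and after \LTM, this is $\sum_{c_i\neq\cstar}(\tilde\sigma(c_i)-\sigma(c_i))=F_D(\emptyset)-F_D(A_0)$.

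With this in hand I would establish two bounds, working per live-edge graph as in Theorem~\ref{th:approx}. For the lower bound, I write $\MOV_D$ as the decrease $F_D(\emptyset)-F_D(A_0)$ in $\cstar$'s score plus the difference between the after-score of the new most-voted opponent $\tilde c$ and the before-score of the original most-voted opponent $c$; since every opponent's score can only increase when $\cstar$ descends, the after-score of $\tilde c$ is at least that of $c$, which is at least the before-score of $c$, so this difference is nonnegative and $\MOV_D(A_0)\geq F_D(\emptyset)-F_D(A_0)$. For the upper bound on the optimum, I decompose $\MOV_D(A_0^\star)$ the same way; its opponent part is $\tilde c^\star$'s after-score minus $c$'s before-score. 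Because $c$ maximizes the score before the process, $\tilde c^\star$'s before-score is at most $c$'s, so the opponent part is at most $\tilde c^\star$'s own gain, which by the conservation identity is at most the total opponent gain $F_D(\emptyset)-F_D(A_0^\star)$. Hence $\MOV_D(A_0^\star)\leq 2\,(F_D(\emptyset)-F_D(A_0^\star))$.

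Chaining the three inequalities gives
\[
\MOV_D(A_0)\;\geq\;F_D(\emptyset)-F_D(A_0)\;\geq\;(1-1/e)\,(F_D(\emptyset)-F_D(A_0^\star))\;\geq\;\tfrac12(1-1/e)\,\MOV_D(A_0^\star),
\]
which is the claimed bound. The main obstacle is the upper bound on the optimum: one must argue that isolating the after-process most-voted opponent costs only a factor $2$, which rests on the conservation identity (total opponent gain matches $\cstar$'s loss exactly) together with the observation that replacing the after-maximizer's before-score by the before-maximizer's can only help. Handling the maximum over opponents rigorously is exactly why, as in Theorem~\ref{th:approx}, the two bounds are best proved on a fixed $G'$ and then averaged via $\Ex{\MOV_D(A_0)}=\Ex{\MOV_{D,G'}(A_0)}$, the destructive counterpart of the identity used in Section~\ref{sec:approx}.
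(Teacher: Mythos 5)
Your proof is correct, and it reaches the stated bound through the same two pillars as the paper --- the reduction of Lemma~\ref{lem:reduction} combined with the $(1-1/e)$ guarantee of \textsc{Greedy} on the transformed constructive score $F'(A_0)-F'(\emptyset)$ --- but the way you pay the extra factor $\tfrac12$ is genuinely different in structure. The paper transplants the bookkeeping of Theorem~\ref{th:approx}: it expands $\Ex{\MOV_D(A_0)}$ on a live-edge graph into the score-decrease term plus the $\bar{c}$/$\hat{c}$ opponent terms, distributes the factors $(1-1/e)$ and $\tfrac12$ across that expansion, and closes with the minimality of $\bar{c}$; the domination of the single-opponent term by the score-decrease term is asserted only implicitly (by analogy with the constructive case), and along the way sign-indefinite quantities such as $|V^1_{\bar{c}}|-|V^1_{c}|$ get multiplied by $(1-1/e)$ without comment. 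You instead isolate the underlying reason as an explicit conservation identity --- on each realization the telescoping score gains of the overtaking candidates sum exactly to $\cstar$'s loss, so any single opponent's gain is at most $F_D(\emptyset)-F_D(A_0)$ --- and package the whole argument as a clean two-sided sandwich $F_D(\emptyset)-F_D(A_0)\leq \MOV_D(A_0)\leq 2\left(F_D(\emptyset)-F_D(A_0)\right)$, proved per live-edge graph and then averaged. What your route buys is transparency and rigor at the one step where the factor $2$ actually arises (and it makes clear the same sandwich would give the bound for any algorithm with a score-maximization guarantee); what the paper's route buys is uniformity of presentation with the constructive $\tfrac13$ proof. Both hinge on the same facts; yours states them where the paper leaves them between the lines.
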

\begin{proof}

Let $A_0$ be the solution found by \textsc{Greedy} (Algorithm~\ref{alg:greedy}) in the election control problem and let $A_0^\star$ be the optimal solution.
Let $\bar{c}$ and $\hat{c}$ respectively be the candidates that minimize the first term of $\Ex{\MOV_{D}(A_0)}$ and $\Ex{\MOV_{D}(A_0^\star)}$.
By Lemma~\ref{lem:reduction} we have that
\begin{align*}
    & \Ex{\MOV_D(A_0)} 
    = \Ex{\mu(A_0) - \mu(\emptyset)} 
    \\
    & = F(\emptyset) - F_D(A_0) - |V^1 _c| + |V^1 _{\bar{c}}|
    \\
    &\qquad + \sum_{r=1}^{m-1}\sum_{h=r+1}^{m}\sum_{\ell=r+1}^{m} \Pr(r, \ell) \, |R_{G'}(A_0, V^r _{\cstar}\cap V^h _{\bar{c}})| \, (f(h-1) - f(h))
    \\
    & = F'(A_0) - F'(\emptyset) - |V^1 _c| + |V^1 _{\bar{c}}| 
    \\
    &\qquad + \sum_{r=1}^{m-1}\sum_{h=r+1}^{m}\sum_{\ell=r+1}^{m} \Pr(r, \ell) \, |R_{G'}(A_0, V^r _{\cstar}\cap V^h _{\bar{c}})| \, (f(h-1) - f(h))
\end{align*}
where $c$ is the most voted candidate before the process.
Since $F'(A_0) - F'(\emptyset)$ is an instance of the score in the constructive case we able to approximate this value, thus we get
\begin{align*}
    &\Ex{\MOV_D(A_0)} \geq \left(1 - \frac{1}{e}\right) \left[ F'(A_0^*) - F'(\emptyset) - |V^1 _c| + |V^1 _{\bar{c}}| \right.
    \\
    &\qquad\left.
    + \sum_{r=1}^{m-1}\sum_{h=r+1}^{m}\sum_{\ell=r+1}^{m} \Pr(r, \ell) \, |R_{G'}(A_0, V^r _{\cstar}\cap V^h _{\bar{c}})| \, (f(h-1) - f(h)) \right]
    \\
    &\geq \frac{1}{2}\left(1 - \frac{1}{e}\right) \left[ F(\emptyset) - F_D(A_0^\star) - |V^1 _c| + |V^1 _{\bar{c}}|
    \right.
    \\
    &\qquad\left.
    +\sum_{r=1}^{m-1}\sum_{h=r+1}^{m}\sum_{\ell=r+1}^{m} \Pr(r, \ell) \, |R_{G'}(A_0^\star, V^r _{\cstar}\cap V^h _{\hat{c}})| \, (f(h-1) - f(h))\right.
    \\
    &\qquad\left.
    +\sum_{r=1}^{m-1}\sum_{h=r+1}^{m}\sum_{\ell=r+1}^{m} \Pr(r, \ell) \, |R_{G'}(A_0, V^r _{\cstar}\cap V^h _{\bar{c}})| \, (f(h-1) - f(h)) + |V^1 _{\hat{c}}| - |V^1 _{\hat{c}}| \right]
    \\
    &\geq \frac{1}{2}\left(1 - \frac{1}{e}\right) \left[ \MOV_D(A^\star) + |V^1 _{\bar{c}}|
    \right.
    \\
    &\qquad\left.
    +\sum_{r=1}^{m-1}\sum_{h=r+1}^{m}\sum_{\ell=r+1}^{m} \Pr(r, \ell) \, |R_{G'}(A_0^, V^r _{\cstar}\cap V^h _{\bar{c}})| \, (f(h-1) - f(h)) - |V^1 _{\hat{c}}| \right]
\end{align*}
By definition of $\bar{c}$ we have that 
\begin{align*}
|V^1 _{\bar{c}}| - |V^1 _{\hat{c}}| 
&\geq \sum_{r=1}^{m-1}\sum_{h=r+1}^{m}\sum_{\ell=r+1}^{m} \Pr(r, \ell) \, |R_{G'}(A_0, V^r _{\cstar}\cap V^h _{\hat{c}})| \, (f(h-1) - f(h))
\\
&\qquad- \sum_{r=1}^{m-1}\sum_{h=r+1}^{m}\sum_{\ell=r+1}^{m} \Pr(r, \ell) \, |R_{G'}(A_0, V^r _{\cstar}\cap V^h _{\bar{c}})| \, (f(h-1) - f(h))
\end{align*}
and therefore
\[
\MOV_D(A_0) \geq \frac{1}{2} \left(1 - \frac{1}{e}\right) \MOV_D(A_0^*).
\]
\end{proof}

\section{Simulations}
In this section we present some experimental results that show how our approximation algorithm performs on real-world networks.
We chose four heterogeneous social and communication networks on which political campaigning messages could spread, namely:
\begin{enumerate}
    \item \emph{facebook},\footnote{\label{fn:konect}\url{http://konect.uni-koblenz.de}} an undirected network of 10 Facebook users, with 2,888 nodes and 2,981 edges;
    \item \emph{irvine},\fnref{\ref{fn:konect}} a directed network of instant messages exchanged between students at U.C.\ Irvine, with 1,899 nodes and 20,296 edges;
    \item \emph{netscience},\footnote{\label{fn:netdata}\url{http://www.umich.edu/~mejn/netdata/}} an undirected network of research collaborations in network science, with 1,461 nodes and 2,742 edges;
    \item \emph{polblogs},\fnref{\ref{fn:netdata}} a directed network of hyperlinks between web blogs on US politics, with 1,224 nodes and 19,025 edges.
\end{enumerate}
We made the two undirected networks directed, by doubling the edges and orienting them;
moreover, to adhere to the Linear Threshold Model, we assigned random weights to edges of the graphs since they are unweighted.
Recent literature in influence maximization shows that advanced techniques can be used to scale our approximation algorithm to much larger networks~\cite{tang2014influence}.

\smallskip
We considered three different scenarios, each with a different number of candidates, i.e., $m=2,5,10$.
For each scenario, we assigned a random preference list to each node of the networks; this assignment was performed 10 distinct times, by randomly permuting its preference list.
We separately analyzed three different initial budgets, i.e., $B=5,10,15$,
and three different values of $\alpha$ (the rate at which the position of candidate $c^\star$ changes in the preference list of each node), i.e., $\alpha=0.1,0.5,1$.
For each combination of parameters (dataset, number of candidates, preference list assignment, budget, and $\alpha$) we performed 20 experiments for each of the two considered voting systems,
namely \emph{plurality rule} and \emph{borda count} (as example for the \emph{scoring rule}), in the \emph{constructive election control} scenario.
We measured the \emph{Probability of Victory} (\POV), i.e., the fraction of times $c^\star$ won out of the 20 experiments, and the \emph{Margin of Victory} (\MOV), average value of the difference between the score of candidate $c^\star$ and the score of the most voted opponent.%
\footnote{Note that in this section we consider $\MOV = -\mu(A_0)$, i.e., we do not consider the initial difference $\mu(\emptyset)$, allowing negative values (we are not computing approximation ratios).}

\smallskip
All the experiments were run in parallel on a machine with four 16-core AMD Opteron\texttrademark{} 6376 with 2.3 GHz CPU, 16 MB L2 cache, 64 GB RAM, running Ubuntu 16.04.5 LTS. 
Overall, we performed 43,200 distinct runs for an average running time of approximately 15 minutes per run.

\begin{figure}[p!]
\centering
\begin{subfigure}{0.33\textwidth}            
    \includegraphics[width=\textwidth]{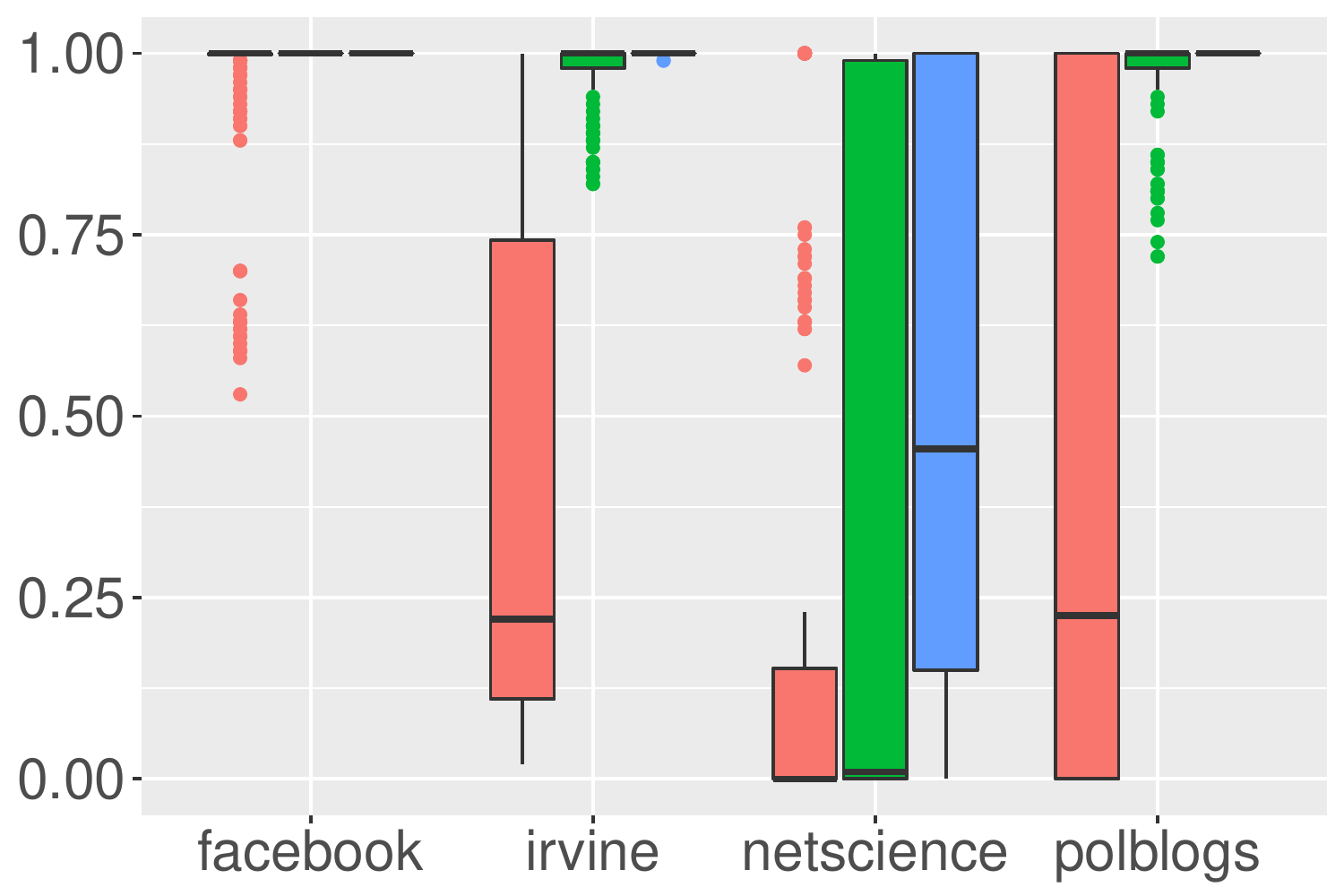}
    \caption{\POV, \emph{plurality rule}, B=5}
    \label{fig:pov_p5}
\end{subfigure}\hfill%
\begin{subfigure}{0.33\textwidth}            
    \includegraphics[width=\textwidth]{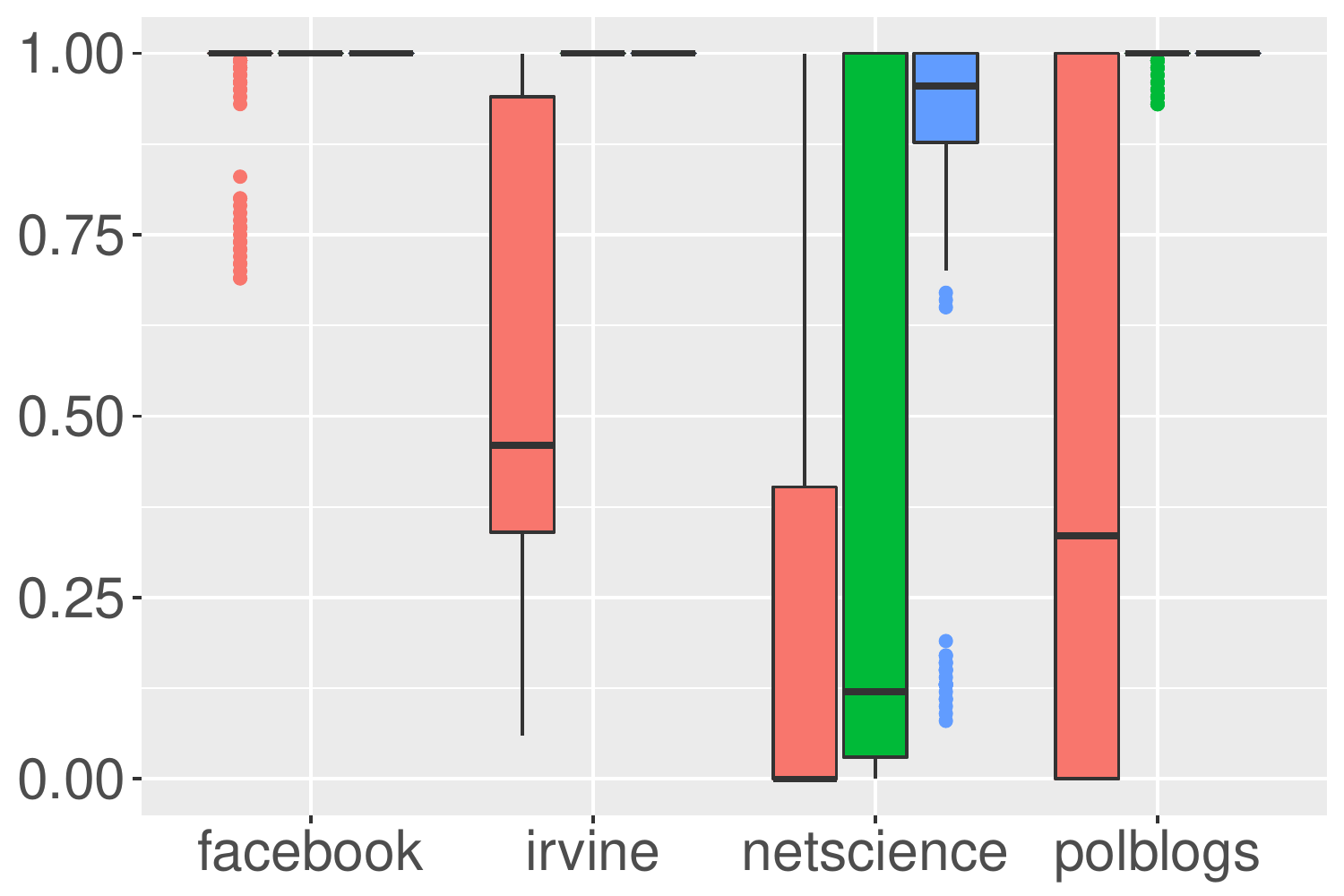}
    \caption{\POV, \emph{plurality rule}, B=10}
    \label{fig:pov_p10}
\end{subfigure}\hfill%
\begin{subfigure}{0.33\textwidth}            
    \includegraphics[width=\textwidth]{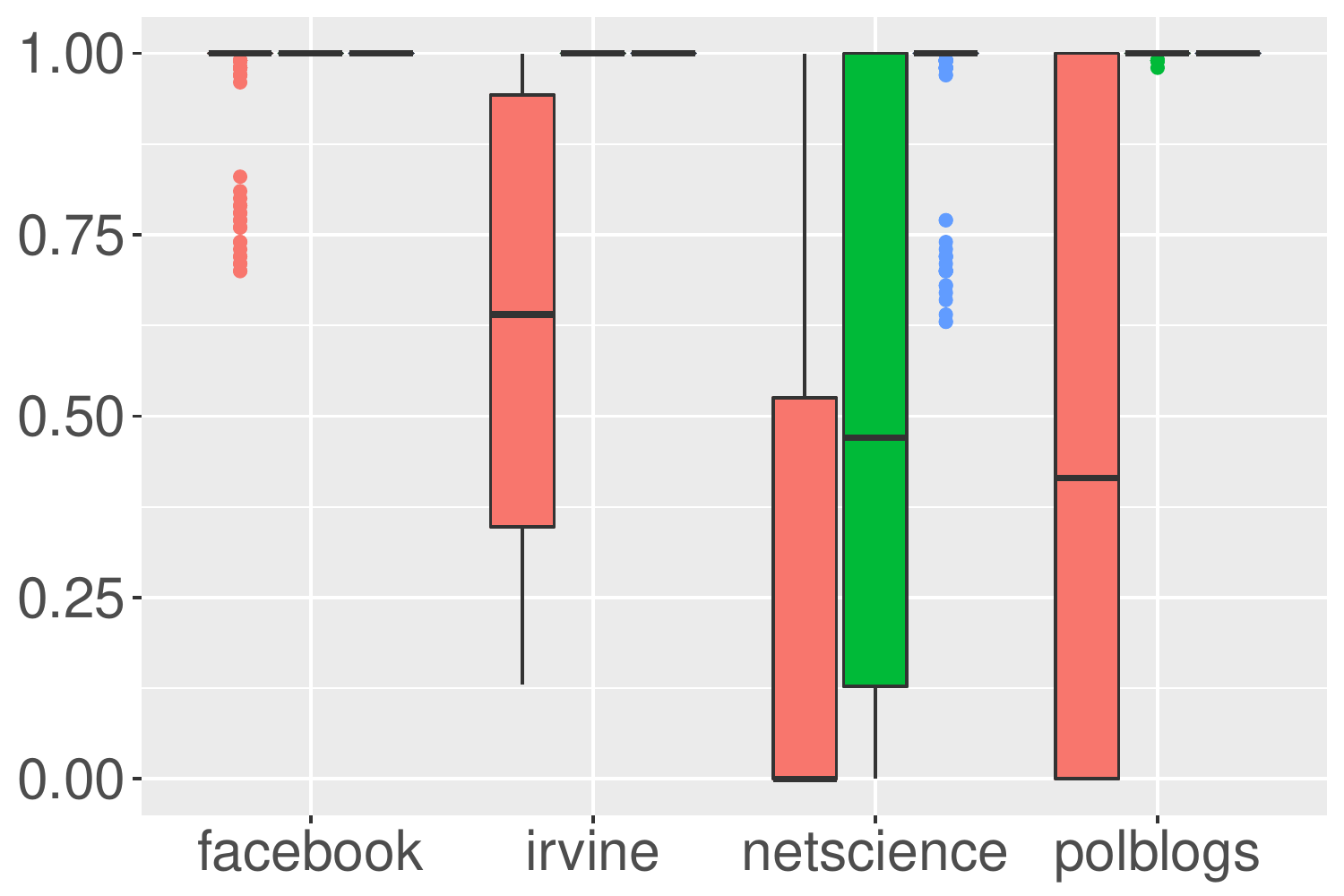}
    \caption{\POV, \emph{plurality rule}, B=15}
    \label{fig:pov_p15}
\end{subfigure}
\begin{subfigure}{0.33\textwidth}            
    \includegraphics[width=\textwidth]{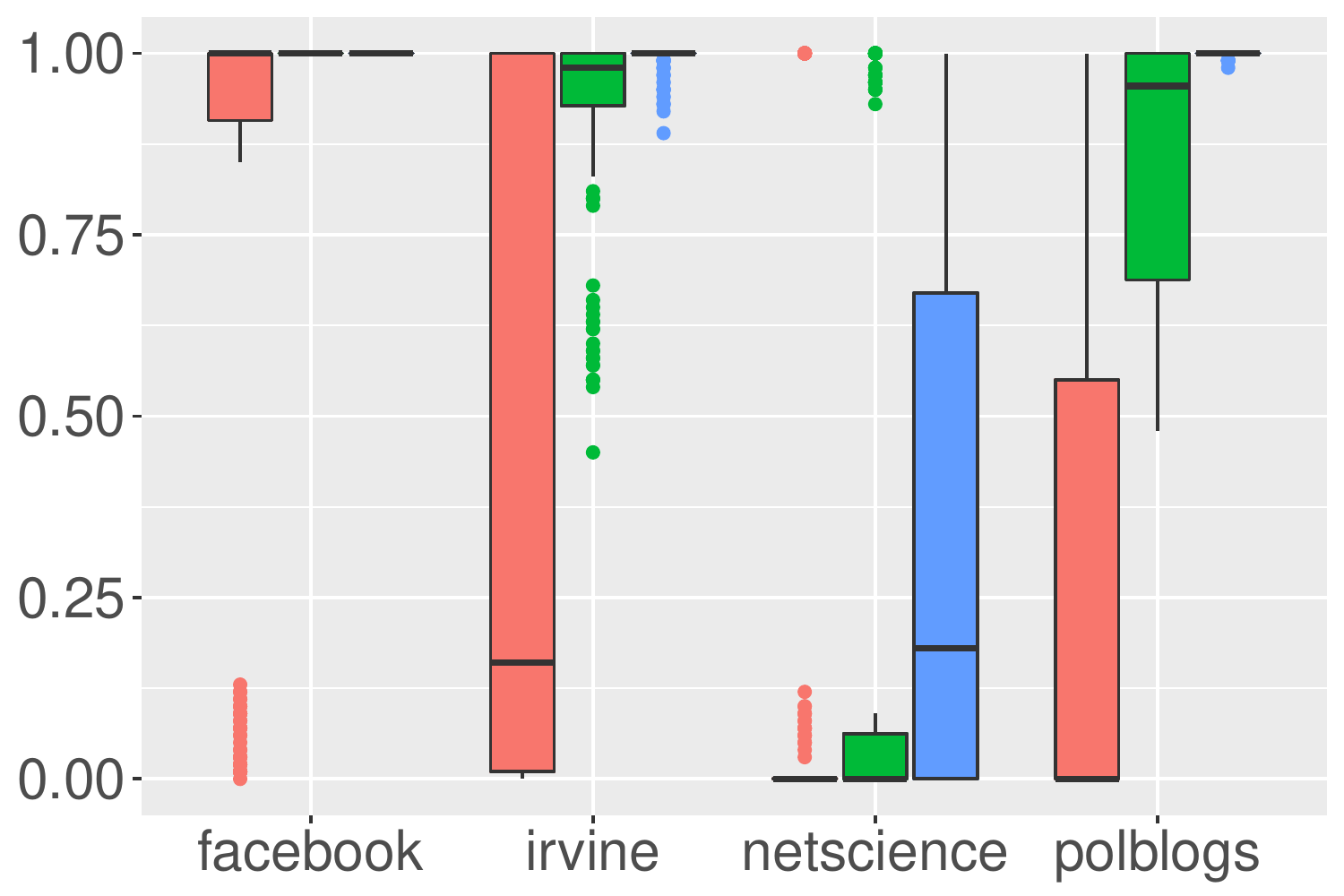}
    \caption{\POV, \emph{borda count}, B=5}
    \label{fig:pov_b5}
\end{subfigure}\hfill%
\begin{subfigure}{0.33\textwidth}            
    \includegraphics[width=\textwidth]{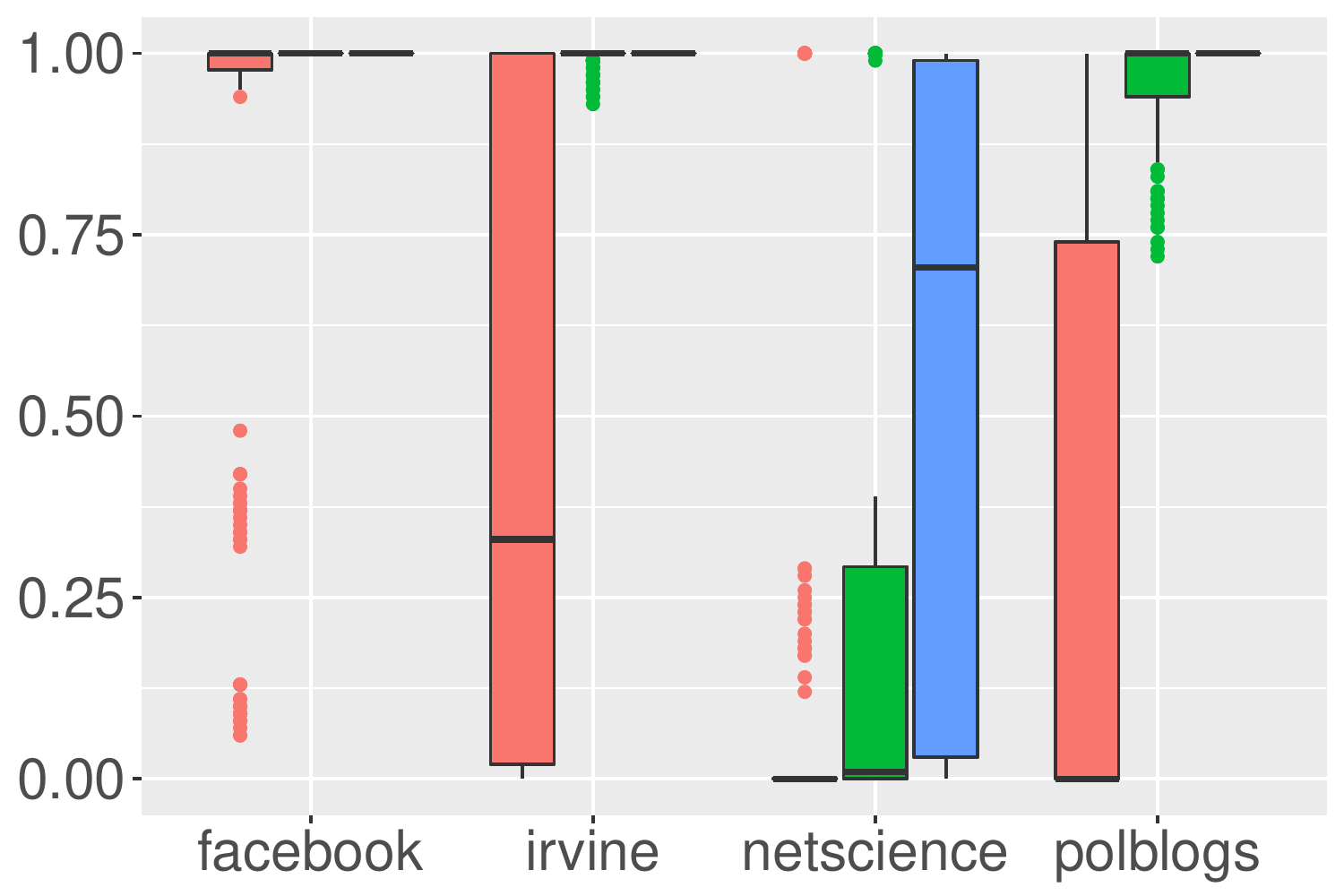}
    \caption{\POV, \emph{borda count}, B=10}
    \label{fig:pov_b10}
\end{subfigure}\hfill%
\begin{subfigure}{0.33\textwidth}            
    \includegraphics[width=\textwidth]{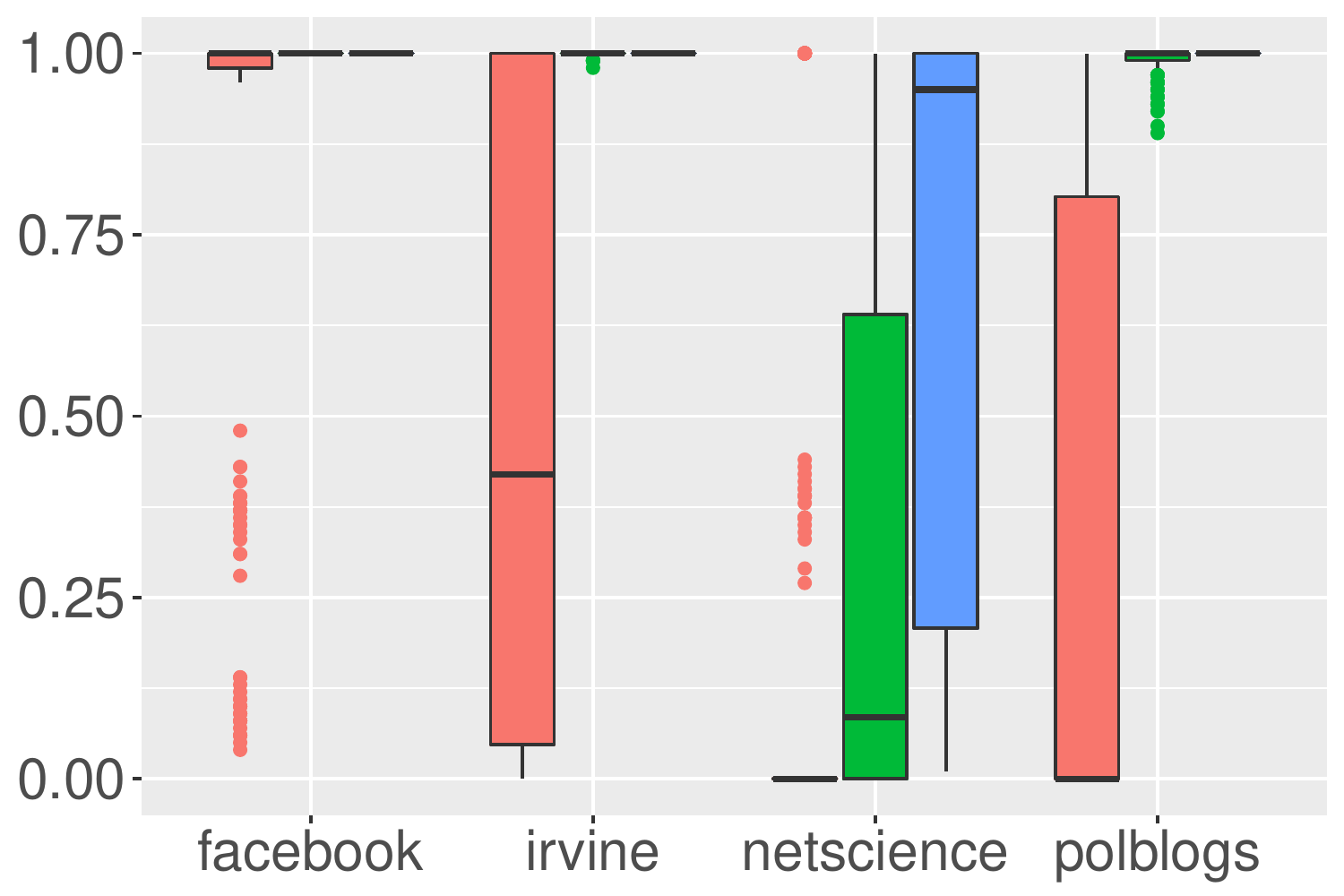}
    \caption{\POV, \emph{borda count}, B=15}
    \label{fig:pov_b15}
\end{subfigure}
\begin{subfigure}{0.33\textwidth}            
    \includegraphics[width=\textwidth]{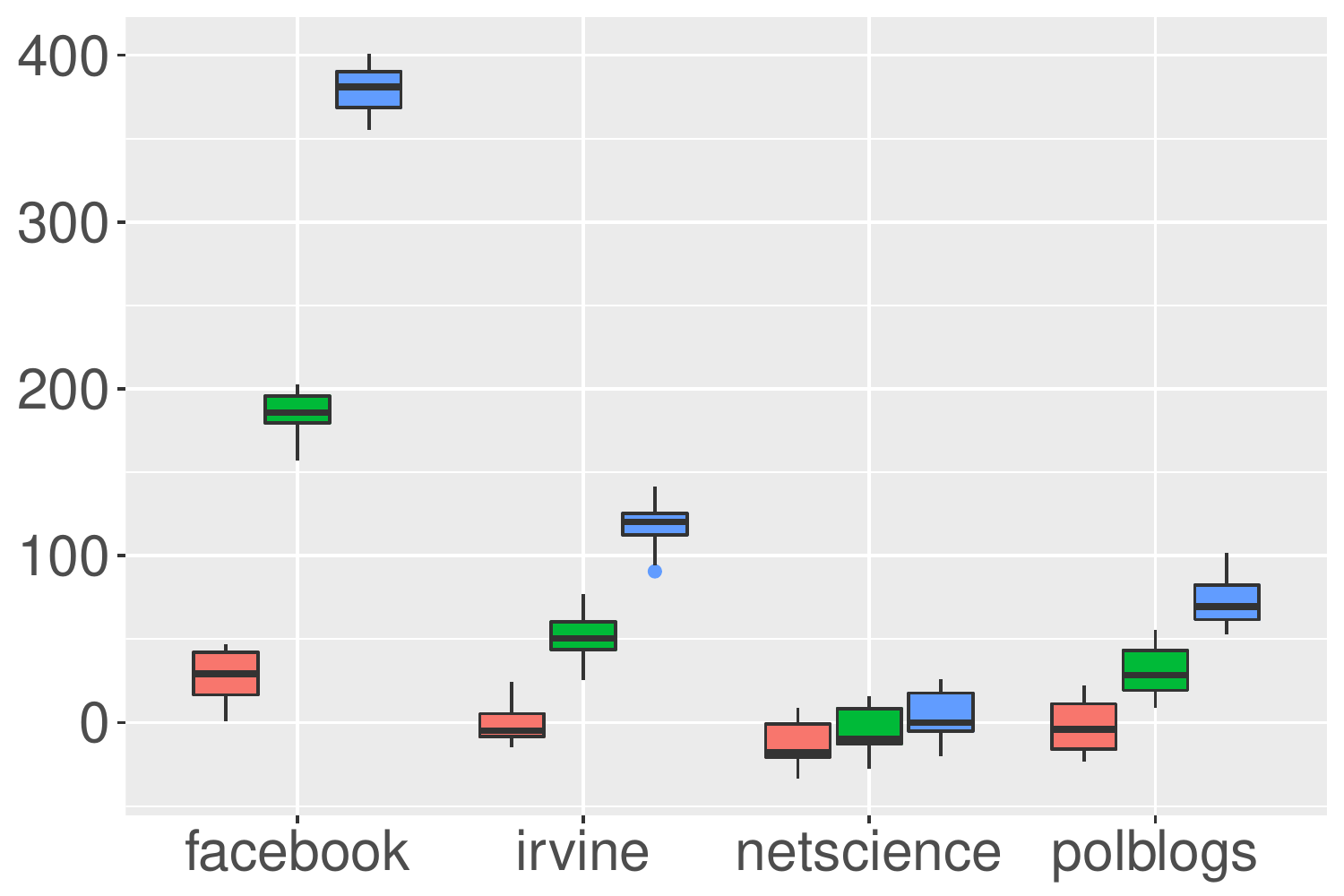}
    \caption{\MOV, \emph{plurality rule}, B=5}
    \label{fig:mov_p5}
\end{subfigure}\hfill%
\begin{subfigure}{0.33\textwidth}            
    \includegraphics[width=\textwidth]{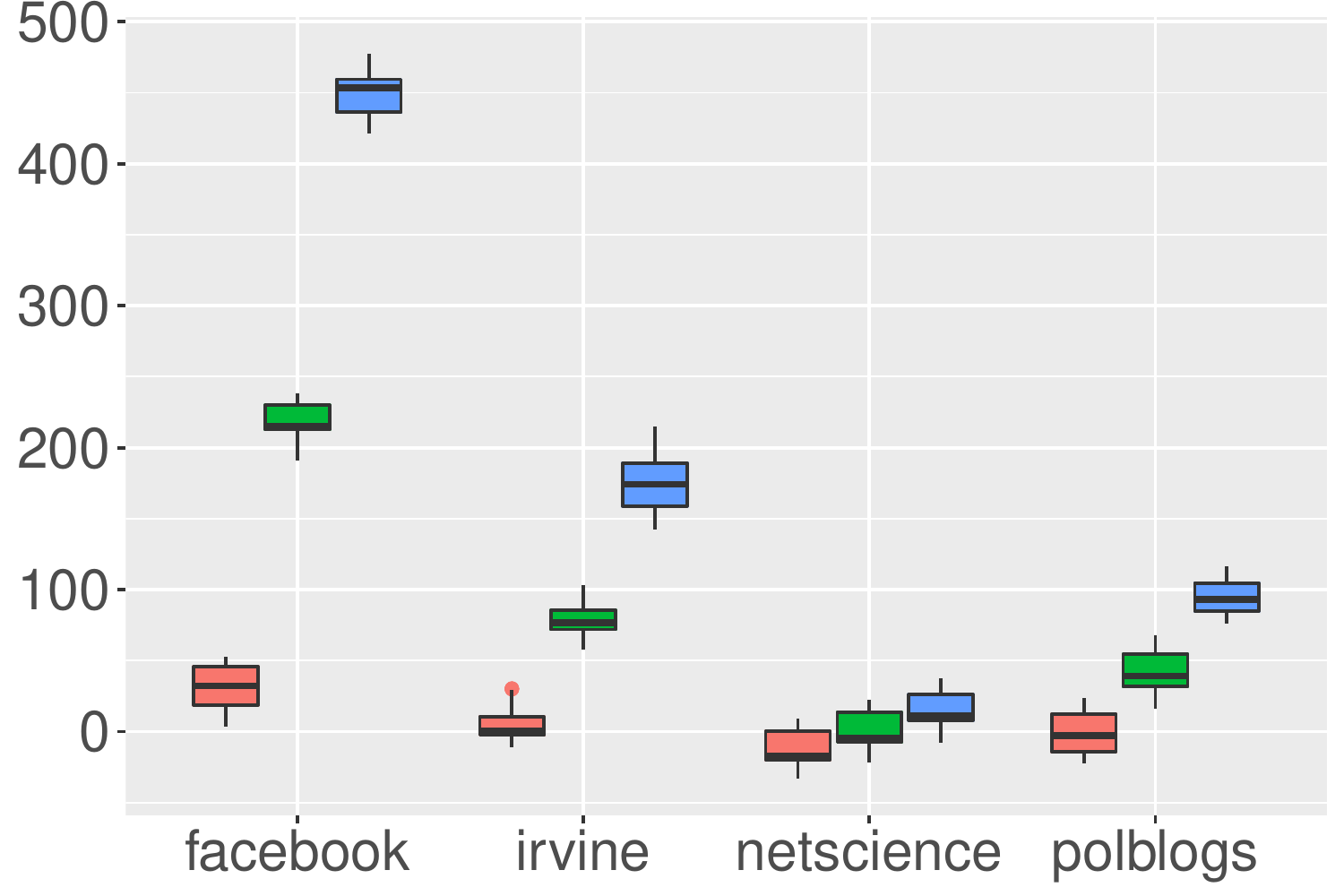}
    \caption{\MOV, \emph{plurality rule}, B=10}
    \label{fig:mov_p10}
\end{subfigure}\hfill%
\begin{subfigure}{0.33\textwidth}            
    \includegraphics[width=\textwidth]{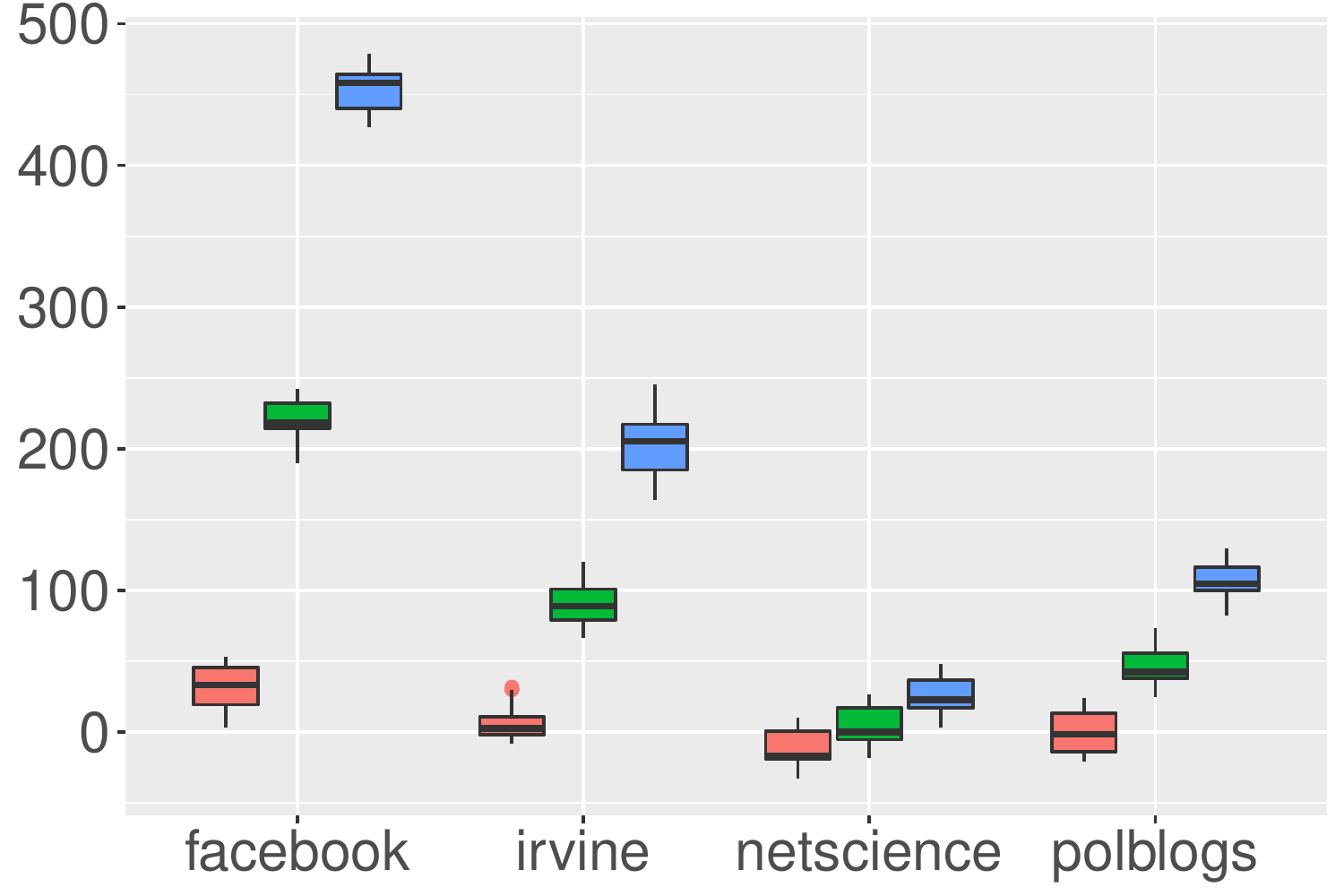}
    \caption{\MOV, \emph{plurality rule}, B=15}
    \label{fig:mov_p15}
\end{subfigure}
\begin{subfigure}{0.33\textwidth}            
    \includegraphics[width=\textwidth]{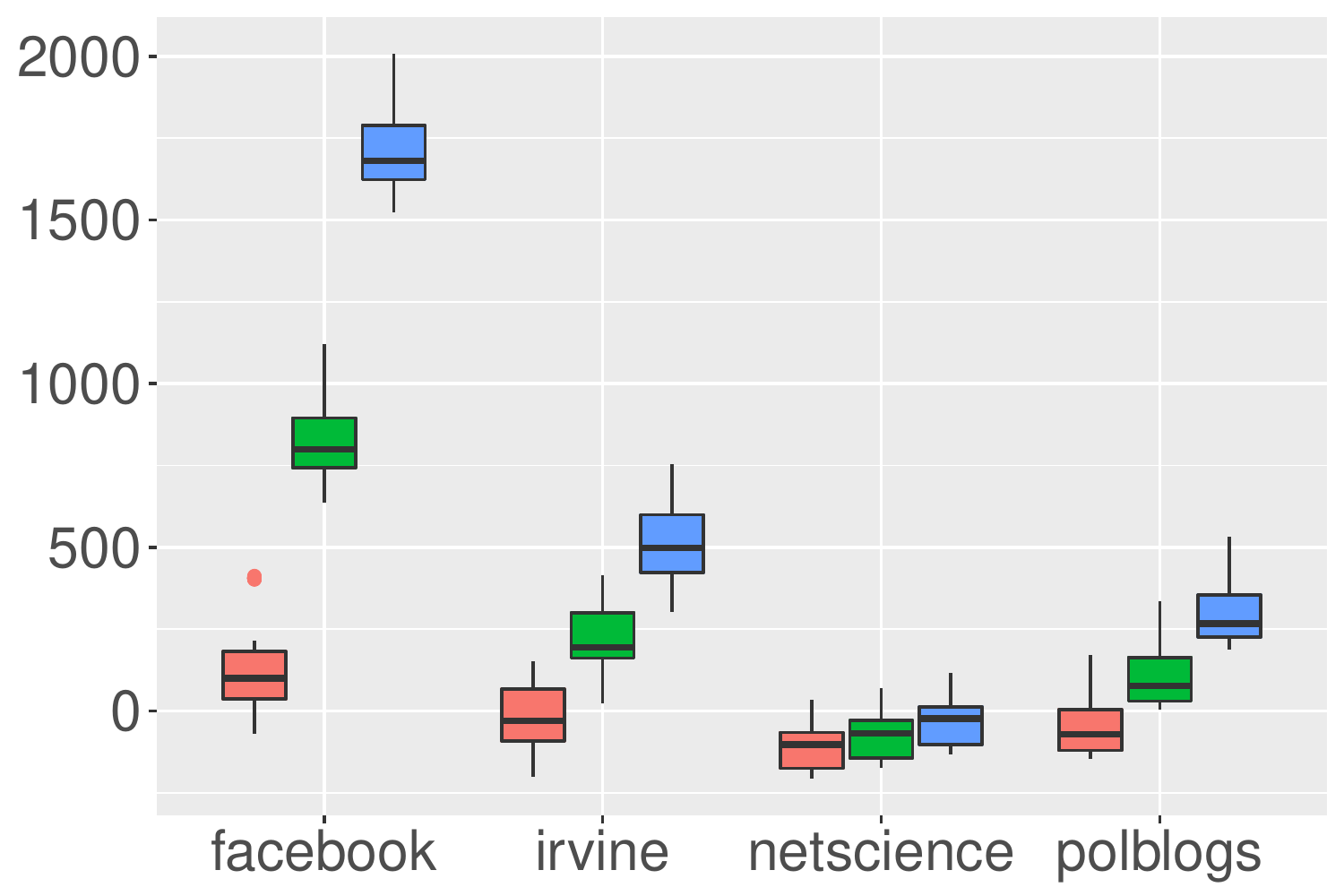}
    \caption{\MOV, \emph{borda count}, B=5}
    \label{fig:mov_b5}
\end{subfigure}\hfill%
\begin{subfigure}{0.33\textwidth}            
    \includegraphics[width=\textwidth]{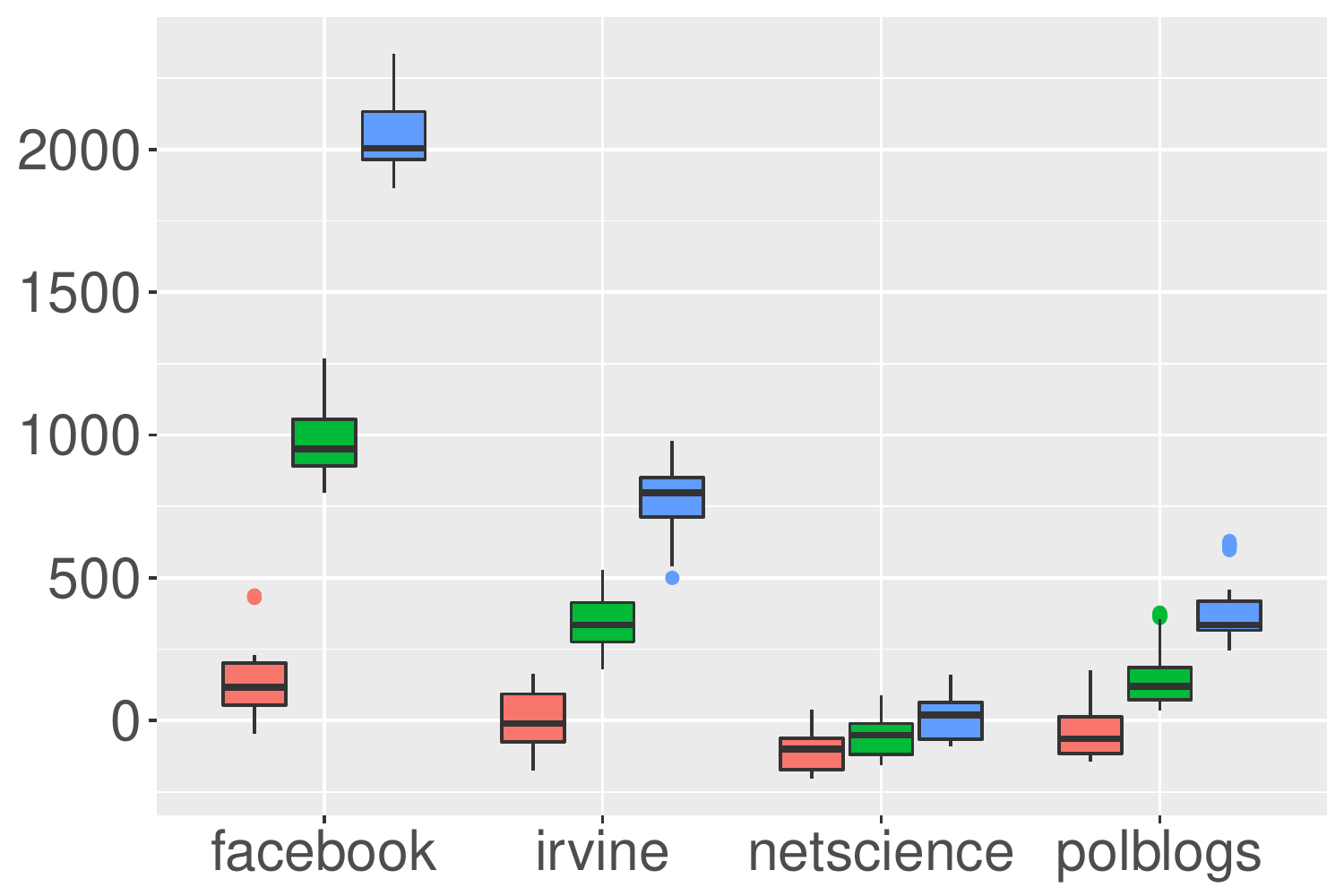}
    \caption{\MOV, \emph{borda count}, B=10}
    \label{fig:mov_b10}
\end{subfigure}\hfill%
\begin{subfigure}{0.33\textwidth}            
    \includegraphics[width=\textwidth]{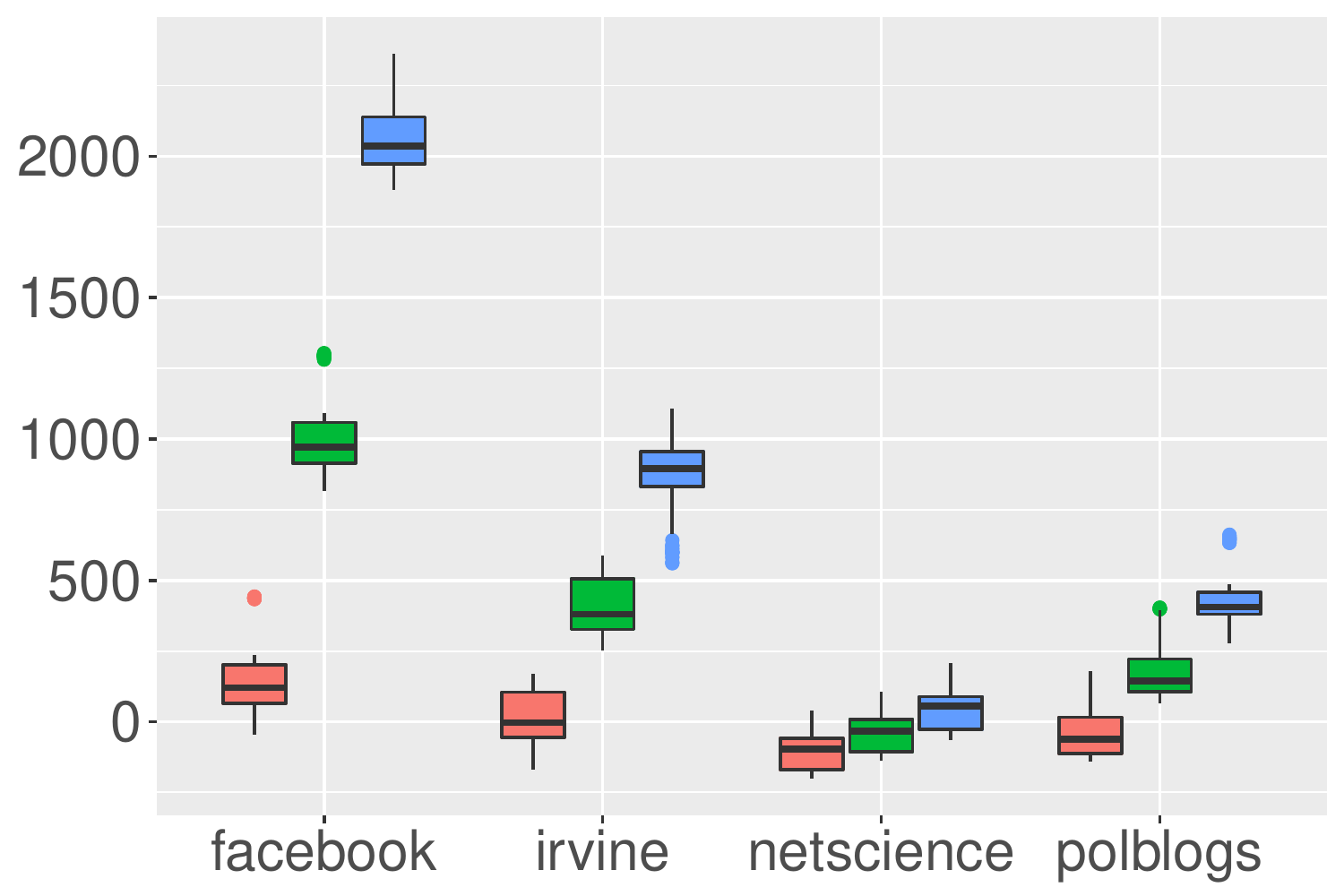}
    \caption{\MOV, \emph{borda count}, B=15}
    \label{fig:mov_b15}
\end{subfigure}
\caption{\POV and \MOV with $m=10$. Each plot compares the results on different datasets (\emph{facebook}, \emph{irvine}, \emph{netscience}, \emph{polblogs}) and for different values of $\alpha$. For each dataset, from left to right: $\alpha=0.1$ (\emph{red}), $\alpha=0.5$ (\emph{green}), $\alpha=1.0$ (\emph{blue}).}
\label{fig:experiments}
\end{figure}

Figure~\ref{fig:experiments} shows the effectiveness of the algorithm in the scenario with $m=10$ candidates running for the elections and a fixed budget $B=5, 10, 15$ using as voting system the plurality rule and borda count.

We study the impact of $\alpha$ on the behavior of our algorithm (Figure~\ref{fig:experiments}).
The algorithm succeeds for $\alpha=0.5,1.0$, making candidate $c^\star$ win all or most of the times on all datasets but \emph{netscience}: We believe this is due to the topology of the dataset, with 267 connected components, that limitates the influence diffusion process.
Instead, when $\alpha=0.1$ it is difficult for the initially targeted voters to influence their friends in all datasets but \emph{facebook}, where most of the nodes have few friends and can be easily influenced by the 10 users on which the network is centered.
In fact the lower is the value of $\alpha$ the higher is the rate at which $c^\star$ shifts in the preference lists of the voters.

\smallskip
Tables~\ref{tab:experiments_2},~\ref{tab:experiments_5}, and~\ref{tab:experiments_10} report detailed results of the experiments.
Table~\ref{tab:experiments_2} reports unified results for \emph{plurality rule} and \emph{borda count} voting systems, given their equivalence in the scenario with only $m=2$ candidates running for the elections.
Figure~\ref{fig:experiments} gives a visual interpretation of the results, considering the scenario with $m=10$, which is the ``hardest'' among the considered ones, since it has the maximum number of candidates and the minimum budget.
Each boxplot considers 200 observations, i.e., the results obtained by permuting the preference list of each voter 10 times and repeating 20 experiments on each of them, and shows the results for all considered values of $B$.

\begin{landscape}
\begin{table}[pt!]
\centering
\caption{\POV and \MOV values relative to the experiments with $m=2$.}
\label{tab:experiments_2}
\begin{tabular}{crrrrrrrrrrrrr}
\toprule
& & \multicolumn{6}{c}{\POV} & \multicolumn{6}{c}{\MOV} \\
\cmidrule(lr){3-8}
\cmidrule(lr){9-14}
& & \multicolumn{2}{c}{$B=5$} & \multicolumn{2}{c}{$B=10$} & \multicolumn{2}{c}{$B=15$} 
& \multicolumn{2}{c}{$B=5$} & \multicolumn{2}{c}{$B=10$} & \multicolumn{2}{c}{$B=15$} \\ 
& \multicolumn{1}{c}{$\alpha$} & \multicolumn{1}{c}{$\mu$} & \multicolumn{1}{c}{$\sigma$} & \multicolumn{1}{c}{$\mu$} & \multicolumn{1}{c}{$\sigma$}  & \multicolumn{1}{c}{$\mu$} & \multicolumn{1}{c}{$\sigma$} & \multicolumn{1}{c}{$\mu$} & \multicolumn{1}{c}{$\sigma$} & \multicolumn{1}{c}{$\mu$} & \multicolumn{1}{c}{$\sigma$} & \multicolumn{1}{c}{$\mu$} & \multicolumn{1}{c}{$\sigma$} \\ 
\midrule
\multirow{3}{*}{\rotatebox[origin=c]{90}{\emph{facebook}}}
& 0.1
& 1.00 & 0.00 & 1.00 & 0.00 & 1.00 & 0.00 & 149.78 & 46.10 & 165.63 & 47.32 & 168.22 & 46.14 \\
& 0.5
& 1.00 & 0.00 & 1.00 & 0.00 & 1.00 & 0.00 & 631.90 & 39.11 & 743.61 & 35.70 & 751.42 & 38.55 \\
& 1.0
& 1.00 & 0.00 & 1.00 & 0.00 & 1.00 & 0.00 & 1234.91 & 31.05 & 1472.39 & 28.42 & 1476.18 & 28.37 \\
\midrule
\multirow{3}{*}{\rotatebox[origin=c]{90}{\emph{irvine}}}
& 0.1
& 0.95 & 0.15 & 0.96 & 0.13 & 0.99 & 0.03 & 66.93 & 39.44 & 82.04 & 40.21 & 89.84 & 38.75 \\
& 0.5
& 1.00 & 0.00 & 1.00 & 0.00 & 1.00 & 0.00 & 230.48 & 36.68 & 313.38 & 23.76 & 350.47 & 19.72 \\
& 1.0
& 1.00 & 0.00 & 1.00 & 0.00 & 1.00 & 0.00 & 441.67 & 40.74 & 624.90 & 39.16 & 693.16 & 59.91 \\
\midrule
\multirow{3}{*}{\rotatebox[origin=c]{90}{\emph{netscience}}}
& 0.1
& 0.74 & 0.42 & 0.78 & 0.39 & 0.81 & 0.36 & 29.95 & 40.73 & 32.55 & 40.80 & 34.71 & 40.86 \\
& 0.5
& 0.99 & 0.04 & 1.00 & 0.00 & 1.00 & 0.00 & 55.38 & 39.76 & 71.31 & 40.03 & 86.40 & 40.77 \\
& 1.0
& 1.00 & 0.00 & 1.00 & 0.00 & 1.00 & 0.00 & 85.98 & 39.57 & 122.36 & 40.12 & 151.88 & 41.31 \\
\midrule
\multirow{3}{*}{\rotatebox[origin=c]{90}{\emph{polblogs}}}
& 0.1
& 0.91 & 0.29 & 0.92 & 0.25 & 0.93 & 0.22 & 47.16 & 30.39 & 52.69 & 30.57 & 55.24 & 30.40 \\
& 0.5
& 1.00 & 0.00 & 1.00 & 0.00 & 1.00 & 0.00 & 150.11 & 29.34 & 178.11 & 28.84 & 196.28 & 27.95 \\
& 1.0
& 1.00 & 0.00 & 1.00 & 0.00 & 1.00 & 0.00 & 280.13 & 27.29 & 344.81 & 23.40 & 387.12 & 17.19 \\
\bottomrule
\end{tabular}

\footnotesize{$\mu$ and $\sigma$ are, respectively, the mean and the standard deviation of the observations averaged over the $10$ preference list permutations.}
\end{table}
\end{landscape}
\begin{landscape}
\begin{table*}[p!]
\centering
\caption{\POV and \MOV values relative to the experiments with $m=5$.}
\label{tab:experiments_5}
\begin{tabular}{clrrrrrrrrrrrrr}
\toprule
& & & \multicolumn{6}{c}{\POV} & \multicolumn{6}{c}{\MOV} \\
\cmidrule(lr){4-9}
\cmidrule(lr){10-15}
& & & \multicolumn{2}{c}{$B=5$} & \multicolumn{2}{c}{$B=10$} & \multicolumn{2}{c}{$B=15$} 
& \multicolumn{2}{c}{$B=5$} & \multicolumn{2}{c}{$B=10$} & \multicolumn{2}{c}{$B=15$} \\ 
& & \multicolumn{1}{c}{$\alpha$} & \multicolumn{1}{c}{$\mu$} & \multicolumn{1}{c}{$\sigma$} & \multicolumn{1}{c}{$\mu$} & \multicolumn{1}{c}{$\sigma$}  & \multicolumn{1}{c}{$\mu$} & \multicolumn{1}{c}{$\sigma$} & \multicolumn{1}{c}{$\mu$} & \multicolumn{1}{c}{$\sigma$} & \multicolumn{1}{c}{$\mu$} & \multicolumn{1}{c}{$\sigma$} & \multicolumn{1}{c}{$\mu$} & \multicolumn{1}{c}{$\sigma$} \\ 
\midrule
\multirow{6}{*}{\rotatebox[origin=c]{90}{\emph{facebook}}} &
\multirow{3}{*}{\rotatebox[origin=c]{90}{Plurality}}
& 0.1
& 0.86 & 0.29 & 0.95 & 0.12 & 0.95 & 0.10 & 47.52 & 30.88 & 55.56 & 29.91 & 57.08 & 29.96 \\
& & 0.5
& 1.00 & 0.00 & 1.00 & 0.00 & 1.00 & 0.00 & 302.92 & 30.50 & 362.70 & 28.65 & 366.13 & 28.50 \\
& & 1.0
& 1.00 & 0.00 & 1.00 & 0.00 & 1.00 & 0.00 & 621.27 & 29.22 & 745.84 & 30.36 & 749.99 & 28.11 \\
\cline{2-15}
& \multirow{3}{*}{\rotatebox[origin=c]{90}{Borda}} 
& 0.1
& 0.80 & 0.42 & 0.80 & 0.42 & 0.80 & 0.42 & 80.93 & 114.03 & 96.43 & 114.99 & 99.04 & 114.68 \\
& & 0.5
& 1.00 & 0.00 & 1.00 & 0.00 & 1.00 & 0.00 & 629.83 & 109.08 & 751.63 & 106.74 & 757.55 & 106.06 \\
& & 1.0
& 1.00 & 0.00 & 1.00 & 0.00 & 1.00 & 0.00 & 1314.98 & 103.33 & 1572.05 & 98.62 & 1588.18 & 99.25 \\
\midrule
\multirow{6}{*}{\rotatebox[origin=c]{90}{\emph{irvine}}} &
\multirow{3}{*}{\rotatebox[origin=c]{90}{Plurality}}
& 0.1
& 0.68 & 0.40 & 0.77 & 0.36 & 0.81 & 0.34 & 10.96 & 22.64 & 18.46 & 22.42 & 22.53 & 22.34 \\
& & 0.5
& 0.99 & 0.03 & 1.00 & 0.00 & 1.00 & 0.00 & 98.33 & 22.87 & 143.74 & 16.19 & 171.63 & 15.92 \\
& & 1.0
& 1.00 & 0.00 & 1.00 & 0.00 & 1.00 & 0.00 & 211.20 & 25.12 & 296.33 & 25.73 & 341.69 & 31.26 \\
\cline{2-15}
& \multirow{3}{*}{\rotatebox[origin=c]{90}{Borda}} 
& 0.1
& 0.61 & 0.41 & 0.71 & 0.41 & 0.74 & 0.41 & 15.32 & 76.67 & 29.68 & 74.70 & 37.03 & 75.88 \\
& & 0.5
& 0.96 & 0.12 & 1.00 & 0.01 & 1.00 & 0.00 & 203.44 & 74.23 & 307.97 & 70.30 & 359.88 & 64.62 \\
& & 1.0
& 1.00 & 0.01 & 1.00 & 0.00 & 1.00 & 0.00 & 443.95 & 77.76 & 624.60 & 76.87 & 716.32 & 81.49 \\
\midrule
\multirow{6}{*}{\rotatebox[origin=c]{90}{\emph{netscience}}} &
\multirow{3}{*}{\rotatebox[origin=c]{90}{Plurality}}
& 0.1
& 0.32 & 0.47 & 0.35 & 0.47 & 0.38 & 0.46 & -7.98 & 22.00 & -6.65 & 21.73 & -5.54 & 21.76 \\
& & 0.5
& 0.64 & 0.42 & 0.82 & 0.32 & 0.88 & 0.31 & 4.14 & 21.87 & 13.25 & 21.39 & 20.42 & 21.12 \\
& & 1.0
& 0.87 & 0.31 & 0.92 & 0.24 & 0.96 & 0.13 & 19.21 & 21.01 & 38.58 & 20.78 & 53.11 & 22.03 \\
\cline{2-15}
& \multirow{3}{*}{\rotatebox[origin=c]{90}{Borda}} 
& 0.1
& 0.40 & 0.46 & 0.44 & 0.46 & 0.47 & 0.46 & -21.14 & 74.37 & -18.69 & 74.14 & -16.79 & 74.47 \\
& & 0.5
& 0.63 & 0.48 & 0.69 & 0.46 & 0.74 & 0.43 & 4.71 & 73.90 & 20.49 & 73.70 & 34.58 & 73.56 \\
& & 1.0
& 0.74 & 0.42 & 0.85 & 0.33 & 0.90 & 0.29 & 36.54 & 74.51 & 73.44 & 73.24 & 101.91 & 73.86 \\
\midrule
\multirow{6}{*}{\rotatebox[origin=c]{90}{\emph{polblogs}}} &
\multirow{3}{*}{\rotatebox[origin=c]{90}{Plurality}}
& 0.1
& 0.50 & 0.44 & 0.56 & 0.44 & 0.60 & 0.42 & 3.46 & 21.11 & 6.45 & 20.71 & 7.58 & 20.87 \\
& & 0.5
& 1.00 & 0.00 & 1.00 & 0.00 & 1.00 & 0.00 & 56.12 & 20.49 & 74.57 & 18.99 & 82.99 & 18.32 \\
& & 1.0
& 1.00 & 0.00 & 1.00 & 0.00 & 1.00 & 0.00 & 125.56 & 19.15 & 159.31 & 16.60 & 180.31 & 19.28 \\
\cline{2-15}
& \multirow{3}{*}{\rotatebox[origin=c]{90}{Borda}} 
& 0.1
& 0.54 & 0.46 & 0.62 & 0.44 & 0.65 & 0.45 & 10.63 & 63.13 & 15.99 & 62.35 & 19.55 & 61.99 \\
& & 0.5
& 0.98 & 0.04 & 1.00 & 0.01 & 1.00 & 0.00 & 126.71 & 58.98 & 163.79 & 62.94 & 180.19 & 59.44 \\
& & 1.0
& 1.00 & 0.00 & 1.00 & 0.00 & 1.00 & 0.00 & 271.19 & 62.85 & 339.74 & 59.25 & 375.73 & 55.27 \\
\bottomrule
\end{tabular}

\footnotesize{$\mu$ and $\sigma$ are, respectively, the mean and the standard deviation of the observations averaged over the $10$ preference list permutations.}
\end{table*}
\end{landscape}
\begin{landscape}
\begin{table*}[pt!]
\centering
\caption{\POV and \MOV values relative to the experiments with $m=10$.}
\label{tab:experiments_10}
\begin{tabular}{clrrrrrrrrrrrrr}
\toprule
& & & \multicolumn{6}{c}{\POV} & \multicolumn{6}{c}{\MOV} \\
\cmidrule(lr){4-9}
\cmidrule(lr){10-15}
& & & \multicolumn{2}{c}{$B=5$} & \multicolumn{2}{c}{$B=10$} & \multicolumn{2}{c}{$B=15$} 
& \multicolumn{2}{c}{$B=5$} & \multicolumn{2}{c}{$B=10$} & \multicolumn{2}{c}{$B=15$} \\ 
& & \multicolumn{1}{c}{$\alpha$} & \multicolumn{1}{c}{$\mu$} & \multicolumn{1}{c}{$\sigma$} & \multicolumn{1}{c}{$\mu$} & \multicolumn{1}{c}{$\sigma$}  & \multicolumn{1}{c}{$\mu$} & \multicolumn{1}{c}{$\sigma$} & \multicolumn{1}{c}{$\mu$} & \multicolumn{1}{c}{$\sigma$} & \multicolumn{1}{c}{$\mu$} & \multicolumn{1}{c}{$\sigma$} & \multicolumn{1}{c}{$\mu$} & \multicolumn{1}{c}{$\sigma$} \\ 
\midrule
\multirow{6}{*}{\rotatebox[origin=c]{90}{\emph{facebook}}} &
\multirow{3}{*}{\rotatebox[origin=c]{90}{Plurality}}
& 0.1
& 0.96 & 0.12 & 0.97 & 0.08 & 0.97 & 0.08 & 27.99 & 14.88 & 31.81 & 15.55 & 32.16 & 15.50 \\
& & 0.5
& 1.00 & 0.00 & 1.00 & 0.00 & 1.00 & 0.00 & 183.94 & 12.90 & 217.54 & 13.77 & 220.02 & 15.09 \\
& & 1.0
& 1.00 & 0.00 & 1.00 & 0.00 & 1.00 & 0.00 & 378.91 & 13.12 & 450.20 & 15.08 & 454.48 & 14.53 \\
\cline{2-15}
& \multirow{3}{*}{\rotatebox[origin=c]{90}{Borda}} 
& 0.1
& 0.80 & 0.39 & 0.84 & 0.33 & 0.84 & 0.33 & 111.75 & 133.60 & 130.78 & 134.36 & 134.20 & 135.25 \\
& & 0.5
& 1.00 & 0.00 & 1.00 & 0.00 & 1.00 & 0.00 & 816.22 & 134.28 & 973.03 & 128.19 & 987.42 & 134.47 \\
& & 1.0
& 1.00 & 0.00 & 1.00 & 0.00 & 1.00 & 0.00 & 1701.58 & 133.16 & 2036.01 & 131.07 & 2055.55 & 131.70 \\
\midrule
\multirow{6}{*}{\rotatebox[origin=c]{90}{\emph{irvine}}} &
\multirow{3}{*}{\rotatebox[origin=c]{90}{Plurality}}
& 0.1
& 0.40 & 0.38 & 0.55 & 0.34 & 0.62 & 0.31 & -1.04 & 11.82 & 4.09 & 12.00 & 6.14 & 12.05 \\
& & 0.5
& 0.98 & 0.04 & 1.00 & 0.00 & 1.00 & 0.00 & 51.49 & 12.23 & 79.38 & 11.78 & 91.22 & 15.15 \\
& & 1.0
& 1.00 & 0.00 & 1.00 & 0.00 & 1.00 & 0.00 & 118.46 & 9.50 & 175.53 & 18.33 & 203.05 & 19.76 \\
\cline{2-15}
& \multirow{3}{*}{\rotatebox[origin=c]{90}{Borda}} 
& 0.1
& 0.36 & 0.45 & 0.42 & 0.43 & 0.46 & 0.42 & -25.12 & 101.84 & -4.95 & 99.47 & 3.34 & 99.69 \\
& & 0.5
& 0.93 & 0.13 & 1.00 & 0.01 & 1.00 & 0.00 & 216.60 & 100.45 & 347.52 & 98.22 & 408.77 & 107.29 \\
& & 1.0
& 0.99 & 0.02 & 1.00 & 0.00 & 1.00 & 0.00 & 508.13 & 114.34 & 780.43 & 102.47 & 885.19 & 125.87 \\
\midrule
\multirow{6}{*}{\rotatebox[origin=c]{90}{\emph{netscience}}} &
\multirow{3}{*}{\rotatebox[origin=c]{90}{Plurality}}
& 0.1
& 0.18 & 0.36 & 0.23 & 0.39 & 0.24 & 0.41 & -13.66 & 12.85 & -12.97 & 12.84 & -12.41 & 12.90 \\
& & 0.5
& 0.31 & 0.48 & 0.39 & 0.45 & 0.53 & 0.41 & -6.05 & 13.20 & -0.57 & 13.21 & 3.75 & 13.24 \\
& & 1.0
& 0.51 & 0.41 & 0.86 & 0.27 & 0.97 & 0.09 & 3.17 & 13.69 & 15.00 & 13.33 & 25.42 & 13.36 \\
\cline{2-15}
& \multirow{3}{*}{\rotatebox[origin=c]{90}{Borda}} 
& 0.1
& 0.11 & 0.31 & 0.12 & 0.32 & 0.14 & 0.32 & -108.47 & 81.50 & -105.35 & 81.60 & -102.35 & 81.83 \\
& & 0.5
& 0.20 & 0.41 & 0.25 & 0.41 & 0.32 & 0.42 & -75.12 & 81.05 & -53.48 & 80.35 & -36.07 & 81.25 \\
& & 1.0
& 0.33 & 0.41 & 0.54 & 0.46 & 0.66 & 0.42 & -32.80 & 79.95 & 14.51 & 83.71 & 48.93 & 84.95 \\
\midrule
\multirow{6}{*}{\rotatebox[origin=c]{90}{\emph{polblogs}}} &
\multirow{3}{*}{\rotatebox[origin=c]{90}{Plurality}}
& 0.1
& 0.41 & 0.46 & 0.45 & 0.47 & 0.48 & 0.48 & -2.22 & 15.10 & -0.83 & 15.12 & 0.14 & 14.97 \\
& & 0.5
& 0.97 & 0.06 & 1.00 & 0.01 & 1.00 & 0.00 & 31.21 & 14.73 & 41.51 & 14.67 & 46.43 & 13.25 \\
& & 1.0
& 1.00 & 0.00 & 1.00 & 0.00 & 1.00 & 0.00 & 73.13 & 14.60 & 94.24 & 11.49 & 106.88 & 12.39 \\
\cline{2-15}
& \multirow{3}{*}{\rotatebox[origin=c]{90}{Borda}} 
& 0.1
& 0.25 & 0.41 & 0.27 & 0.44 & 0.28 & 0.45 & -48.35 & 92.53 & -42.21 & 92.46 & -38.94 & 92.17 \\
& & 0.5
& 0.87 & 0.18 & 0.96 & 0.07 & 0.99 & 0.02 & 103.80 & 94.58 & 141.53 & 93.96 & 170.65 & 92.91 \\
& & 1.0
& 1.00 & 0.00 & 1.00 & 0.00 & 1.00 & 0.00 & 293.93 & 95.50 & 372.21 & 100.52 & 423.54 & 93.06 \\
\bottomrule
\end{tabular}

\footnotesize{$\mu$ and $\sigma$ are, respectively, the mean and the standard deviation of the observations averaged over the $10$ preference list permutations.}
\end{table*}
\end{landscape}

\section{Discussion of Results}
\label{sec:conclusion}

The results in our paper are very significant. 
Nowadays social media are are significant sources of information for voters and the massive usage of these channels for political campaigning is a turning point. 
Potential attackers can manipulate the outcome of elections through the spread of targeted ads and/or fake news. 
Being able to control the information spread can have a great impact, but it is not easy to achieve given that traditional media sources are relatively transparent.
Therefore, it is essential to protect the integrity of electoral processes to ensure the proper operation of democratic institutions.
Our results indicate that social influence is a salient threat to election integrity: We provide an approximation algorithm to maximize the \MOV of a target candidate, that can be used by an attacker to control the election results and is of fundamental importance to protect their fairness.

There is only another paper that focus on the problem of election control through social influence~\cite{wilder2018controlling}.
Compared to it, we consider a more realistic model (LTM instead of ICM) that takes into account the amount of influence that voters exercise on each other.
We believe that our algorithm could be used in real-life scenarios to predict election results and to understand what degree of control has been exercised. 
Our results assume the knowledge of election data that are not available (degrees of influences and preferences of voters), but that can be estimated. Even if such estimation is not easy, experimental results show that greedy has good performances even on real-world datasets where this data are uncertain.
With this respect, we are aware of recent studies that analyze the robustness of greedy w.r.t.\ inaccurate estimations of the degrees of influence; again, ground truth for such quantities is not available and good estimates are hard to get. 
Nevertheless, experimental results on greedy algorithm for Influence Maximization showed that the worst case hardness theoretical results do not necessarily translate into bad performance on real-world datasets~\cite{he2018robust}.


\section{Conclusion and Future Work}
Online social networks are increasingly utilized for political campaigning since specific users can be targeted by advertisement and/or fake news.
We focused on the problem of controlling election through social influence: Given a social network of people willing to vote, we aim at selecting a fixed-size subset of voters such that their influence on the others will change the outcome of the elections, making some specific candidate win or lose.

We described a powerful extension of the Linear Threshold Model, which describes the change of opinions taking into account the amount of exercised influence.
We provided a constant factor approximation algorithm to the problems of \emph{constructive} and \emph{destructive} election control, considering arbitrary scoring rule voting systems, including \emph{plurality} and \emph{borda count}.
\textsc{Greedy} (Algorithm~\ref{alg:greedy}) achieves a $\frac{1}{3}(1 - 1/e)$ approximation ratio in the constructive scenario, since we showed that any scoring function is monotone submodular w.r.t.\ the initial set of active nodes.
Similarly, we get a $\frac{1}{2}(1 - 1/e)$ approximation ratio for the destructive scenario.

We performed a simulation of our algorithm in our model, examining it on real-world networks using synthetic election data, i.e., random degrees of influences of voters on each other and random preference lists for each voter. 
We ran the simulation with different combinations of parameters, varying $B$, $\alpha$, $|C|$, and $\pi_v$ for each $v \in V$ on 4 networks exhibiting heterogeneous topologies, namely \textit{facebook}, \textit{polblogs}, \textit{irvine}, \textit{netscience}. 
We observed that \textsc{Greedy} is able to find a solution that makes the target candidate win the election in the plurality rule with 10 candidates between 50\% and 88\% of the times, depending on the value of $\alpha$ that changes the degree of influence, using only 5 seed nodes. 

As future research directions we would like to further study our model in a wider range of scenarios which are not currently captured, including \emph{multi-winner} and \emph{proportional representation} systems.
We also believe that approaches that mix constructive and destructive control could be analyzed to get better approximation ratios.
Moreover, we would like to extend our model in order to consider a more uncertain scenario, in which the preferences of voters are not known.
Finally, it would be interesting to study how to prevent election control for the integrity of voting processes, e.g., through the placement of monitors in the network~\cite{zhang2015monitor,amoruso2017monitor} or by considering strategic settings~\cite{yin2018optimal,wilder2018defending}.

\clearpage
\bibliographystyle{alpha}
\bibliography{references}

\end{document}